\theoremstyle{plain}
\newtheorem{Theorem}{Theorem}[section]
\newtheorem{Proposition}[Theorem]{Proposition}
\newtheorem{Corollary}[Theorem]{Corollary}
\newtheorem{Lemma}[Theorem]{Lemma}
\theoremstyle{definition}
\newtheorem{Definition}[Theorem]{Definition}
\theoremstyle{remark}
\begin{document}


\title{Bounds for the diameter of the weight polytope}

\date{}

\author{Sascha Kurz}
\address{Sascha Kurz, University of Bayreuth, 95440 Bayreuth, Germany}
\email{sascha.kurz@uni-bayreuth.de}

\begin{abstract}
  A weighted game or a threshold function in general admits different weighted representations even if the sum of non-negative weights is fixed to one. 
Here we study bounds for the diameter of the corresponding weight polytope. It turns out that the diameter can be upper bounded in terms 
of the maximum weight and the quota or threshold. We apply those results to approximation results between power distributions, given by power indices, 
and weights.
\end{abstract}

\maketitle

\section{Introduction}
\label{sec_introduction} 

Consider a stock corporation whose shares are hold by three major stockholders owning 
35\%, 34\%, and 17\%, respectively. The remaining 14\% are widely spread. 
Assuming that decisions are made by a simple majority rule, all three major stockholders have  
equal influence on the company's decisions, while the private shareholders have no say. To 
be more precise, any two major stockholders can adopt a proposal, while the private   
shareholders together with an arbitrary major stockholder need further affirmation. 
Such decision environments can be captured by means of weighted voting games. Formally, a 
weighted (voting) game consists of a set of players or voters $N=\{1,\dots,n\}$, a vector of non-negative weights 
$w=(w_1,\dots,w_n)$, and a positive quota $q$. A proposal is accepted if and only if the weight sum 
of its supporters meets or exceeds the quota. 

Committees that decide between two alternatives have received wide attention. Von Neumann and Morgenstern 
introduced the notion of simple  games, which is a super class of weighted games, in \cite{von1953theory}. Examples of decision-making bodies that 
can be modeled as weighted games are the US Electoral College, the Council of the European Union, the 
UN Security Council, the International Monetary Fund or the Governing Council of the European Central Bank.
Many applications seek to evaluate players' influence or power in simple or weighted games, see, e.g., 
\cite{leech1987ownership}. The initial example illustrates that shares or 
weights can be a poor proxy for the distribution of power. Using the taxicab metric, i.e., the 
$\Vert\cdot\Vert_1$-distance, the corresponding distance between shares and relative power is 
$\left|0.35-\tfrac{1}{3}\right|+\left|0.34-\tfrac{1}{3}\right|+\left|0.17-\tfrac{1}{3}\right|+
\left|0.14-0\right|\approx$32.67\%. If the weights add up to one, then we speak of relative or normalized weights.
The insight that the power distribution differs from relative weights, triggered the invention of so-called power 
indices like the Shapley-Shubik index \cite{shapley1954method}, the Penrose-Banzhaf index \cite{banzhaf1964weighted},
or the nucleolus \cite{schmeidler1969nucleolus}. Due to the combinatorial nature of most of those power 
indices, qualitative assessments are technically demanding and large numbers of involved parties cause computational 
challenges \cite{chalkiadakis2011computational}.

One reason for the difference between relative weights and power is that a weighted game permits different 
representations. If there are two normalized representations whose weight vectors are at large distance then at least 
one of the relative weight vectors also has a large distance to the power distribution. So, here we study bounds for the 
diameter of the weight polytope, i.e., bounds for the maximal distance between two normalized vectors of  
the same weighted game. We will study those bounds in terms of the number of players, the relative quota, and 
the maximum relative weight in a given representation of the game.

Each weighted game, also called threshold function in threshold logic, admits a representation with integer weights. Bounds for the necessary 
magnitude of integer weighted are studied in the literature, see e.g.\ \cite{babai2010weights} and the references therein.

The remaining part of the paper is structured as follows. In Section~\ref{sec_weight_polytope} we give the necessary definitions 
for simple games, weighted games and the weight polytope. Worst case lower bounds on the diameter of the weight polytope are 
given in Section~\ref{sec_lower_bounds} and upper bounds are given in Section~\ref{sec_upper_bounds}. Applications to 
approximation results for power indices are given in Section~\ref{sec_applications} before we draw a brief conclusion 
in Section~\ref{sec_conclusion}. Some lengthy or more technical proofs are moved to an appendix.

\section{The weight polytope of a weighted game}
\label{sec_weight_polytope}

For a positive integer $n$ let $N=\{1,\dots, n\}$ be the set of players. A \emph{simple game} is a 
mapping $v\colon 2^N\to\{0,1\}$ from the subsets of $N$ to binary outcomes satisfying 
$v(\emptyset)=0$, $v(N)=1$, and $v(S)\le v(T)$ for all $\emptyset\subseteq S\subseteq T\subseteq N$. 
The interpretation in the context of binary voting systems is as follows. A subset $S\subseteq N$, also called coalition,  
is considered as the set of players that are in favor of a proposal, i.e., which vote {\lq\lq}yes{\rq\rq}.  
If $v(S)=1$ we call coalition $S$ winning and losing otherwise. By $\mathcal{W}(v)$ we denote the set of winning 
coalitions and by $\mathcal{L}(v)$ we denote the set of losing coalitions of $v$. If coalition $S$ is winning but each proper subset is 
losing, then we call $S$ minimal winning. Similarly, if $S$ is losing but each proper superset of $S$ is winning, then 
we call $S$ maximal losing. By $\mathcal{W}^m(v)$ we denote the set of minimal winning and by $\mathcal{L}^m(v)$ we denote the 
set of maximal losing coalitions. $v(S)$ encodes the group decision, i.e., 
$v(S)=1$ if the proposal is accepted and $v(S)=0$ otherwise. So, these assumptions for a simple game 
are quite natural for a voting system with binary options in the input and output domain. The dual $v^d$ of a simple game $v$ 
is defined via $v^d(S)=v(N)-v(N\backslash S)=1-v(N\backslash S)$ for all $S\subseteq N$ and is a simple game itself. If $v(S)=v(S\cup\{i\})$ for all 
$S\subseteq N$, then we call player~$i$ a null player. Player~$i$ is a passer if $v(\{i\})=1$. Two players $i$ and $j$ are 
equivalent if $v(S\cup\{i\})=v(S\cup\{j\})$ for all $S\subseteq N\backslash\{i,j\}$.

A simple game $v$ is called \emph{weighted} if there exist weights $w\in\mathbb{R}_{\ge 0}^n$ 
and a quota $q\in\mathbb{R}_{>0}$ such that $v(S)=1$ if and only if $w(S):=\sum_{i\in S} w_i\ge q$. From the conditions 
of a simple game we conclude $0<q\le w(N)$. If $w(N)=1$ we speak of normalized or relative weights, where 
$0<q\le 1$. We denote the respective game by $v=[q;w]$ and refer to the pair $(q;w)$ as a weighted 
representation, i.e., we can have $[q;w]=[q';w']$ but $(q;w)\neq (q';w')$. The example from the introduction 
can, e.g., be represented by $(51\%;35\%,34\%,17\%,14\%)$, $\left(\tfrac{1}{2};\tfrac{1}{3},\tfrac{1}{3},\tfrac{1}{3},0\right)$, 
or $(6;4,3,3,1)$, where the fourth player mimics the private shareholders. 
\begin{Lemma}
  \label{lemma_dual_weights}
  If $(q;w)$ is a normalized representation of a weighted game $v$, then $(1-q+\varepsilon;w)$ is a normalized representation of the 
  dual game $v^d$ for each $0<\varepsilon< \min\{q-w(S)\mid S\in\mathcal{L}(v)\}$.
\end{Lemma}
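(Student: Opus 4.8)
The plan is to work directly from the definitions. First I would record that the quantity $\mu := \min\{q - w(S)\mid S\in\mathcal{L}(v)\}$ is well defined and strictly positive: the set $\mathcal{L}(v)$ is finite and nonempty (it contains $\emptyset$, since $v(\emptyset)=0$), and every losing coalition $S$ satisfies $w(S) < q$ by definition of the representation $[q;w]$. Moreover, since $\emptyset\in\mathcal{L}(v)$ and $w(\emptyset)=0$, we get $\mu\le q$, so any admissible $\varepsilon$ satisfies $0<\varepsilon<q$; combined with $0<q\le w(N)=1$ this shows the proposed quota $q':=1-q+\varepsilon$ lies in $(0,1)$, as a normalized quota must. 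The weight vector $w$ is left unchanged, so $w(N)=1$ still holds, and it only remains to verify that $[q';w]$ is the game $v^d$.

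Next I would translate the dual membership condition. By definition $v^d(S)=1-v(N\setminus S)$, so $v^d(S)=1$ precisely when $N\setminus S\in\mathcal{L}(v)$, i.e.\ when $w(N\setminus S)<q$, which — using $w(N\setminus S)=w(N)-w(S)=1-w(S)$ — is equivalent to $w(S)>1-q$. Hence the claim reduces to proving, for every coalition $S\subseteq N$, the equivalence
\[
  w(S) > 1-q \quad\Longleftrightarrow\quad w(S) \ge q' = 1-q+\varepsilon .
\]

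The reverse implication is immediate because $\varepsilon>0$. For the forward implication, assume $w(S)>1-q$. Then $w(N\setminus S)=1-w(S)<q$, so $N\setminus S$ is a losing coalition of $v$, whence $q-w(N\setminus S)\ge\mu>\varepsilon$. Rearranging gives $w(S)=1-w(N\setminus S)>1-q+\varepsilon=q'$, which in particular yields $w(S)\ge q'$. This establishes the equivalence and therefore $v^d=[q';w]$.

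There is no genuine obstacle here; the points that need care are checking that $\mu$ is attained and strictly positive — so that the stated range of admissible $\varepsilon$ is nonempty and the new quota stays in $(0,1)$ — and keeping track of strict versus non-strict inequalities when passing between the ``$w(S)\ge q$'' description of the winning coalitions of $v$ and the ``$w(T)<q$'' description of its losing coalitions.
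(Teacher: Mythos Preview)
Your proof is correct and follows essentially the same approach as the paper's: both arguments split into the two cases according to whether $S$ is winning or losing in $v^d$, translate this via the complement $N\setminus S$ being losing or winning in $v$, and invoke the bound $\varepsilon<q-w(N\setminus S)$ for losing complements. Your write-up is in fact a bit more careful about the preliminary checks (well-definedness and positivity of $\mu$, and that $q'\in(0,1)$) than the paper's own proof.
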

\begin{proof}
  For each losing coalition $S$ of $v^d$ the coalition $N\backslash S$ is winning in $v$, so that $w(N\backslash S)=1-w(S)\ge q$ and 
  $w(S)\le 1-q<1-q+\varepsilon$. Now let $S$ be a winning coalition of $v^d$, so that $N\backslash S$ is losing in $v$ and $\varepsilon<q-w(N\backslash S)=q-1+w(S)$, 
  which is equivalent to $w(S)>1-q+\varepsilon$. Since $\emptyset$ is a losing coalition in $v$ we have $\varepsilon<q-w(\emptyset)=q$, so 
  that $1-q+\varepsilon<1$.
\end{proof}
Note that $\min\{q-w(S)\mid S\in\mathcal{L}(v)\}>0$.

Given a weighted game $v$, we call a weight vector $w\in\mathbb{R}_{\ge 0}$ feasible if there exists a quota $q\in\mathbb{R}_{>0}$ 
satisfying $v=[q;w]$. Obviously, such a quota exists iff the largest weight of a losing coalition is strictly smaller than the smallest 
weight of a winning coalition. Thus, c.f.~\cite[Lemma 3.2]{kaniovski2015representation}, the set of feasible normalized weight vectors is given by
\begin{eqnarray*}
  &&\left\{ w\in\mathbb{R}_{\ge 0}^n\mid w(N)=1, v(S)>v(T)\quad\forall S\in\mathcal{W}(v), T\in\mathcal{L}(v)\right\}\\
  &=& \left\{ w\in\mathbb{R}_{\ge 0}^n \mid w(N)=1, v(S)>v(T)\quad\forall S\in\mathcal{W}^m(v), T\in\mathcal{L}^m(v)\right\}.
\end{eqnarray*}
Note that these sets only depend on the game $v$ and are non-empty for weighted games. Due to the involved strict inequalities 
we have to consider their closure in order to obtain polytopes.

\begin{Definition}
  \label{def_weighted_polytope}
  For a weighted game $v$ we define the weight polytope of $v$ by
  $$
    \mathsf{W}(v)= \left\{ w\in\mathbb{R}_{\ge 0}^n \mid w(N)=1, v(S)\ge v(T)\quad\forall S\in\mathcal{W}^m(v), T\in\mathcal{L}^m(v)\right\}
  $$
  and call 
  $$
    \operatorname{diam}(\mathsf{W}(v))=\max\left\{\Vert w-w'\Vert_1 \mid w,w'\in \mathsf{W}(v)\right\}
  $$
  its diameter, where $\Vert x\Vert_1:=\sum_i \left| x_i\right|$. 
\end{Definition}
As an example we consider the weighted game $v=[2;1,1,1]$. For $w\in\mathsf{W}(v)$ the conditions  $w(S)\ge w(T)$ for all $S\in\mathcal{W}^m(v)$ and all 
$T\in\mathcal{L}^m(v)$ read $w_1+w_2\ge w_3$, $w_1+w_3\ge w_2$, and $w_2+w_3\ge w_1$. The normalization $w(N)=1$ can be used to eliminated $w_3$ via 
$w_3=1-w_1-w_2$. Finally, respecting $w\in\mathbf{R}_{\ge 0}^3$ gives
$$
  \mathsf{W}(v)=\left\{ \left(w_1,w_2,1-w_1-w_2\right)\mid  0\le w_1\le \frac{1}{2}, 0\le w_2\le \frac{1}{2}, w_1+w_2\ge \frac{1}{2}\right\}.
$$
Since $w:=\left(\tfrac{1}{2},\tfrac{1}{2},0\right)\in\mathsf{W}(v)$ and $w':=\left(\tfrac{1}{2},0,\tfrac{1}{2}\right)\in\mathsf{W}(v)$, we have 
$$
  \operatorname{diam}(\mathsf{W}(v))\ge \Vert w-w'\Vert_1=1. 
$$  
Indeed, it can be shown that $\Vert \tilde{w}-\hat{w}\Vert_1\le 1$ for all $\tilde{w},\hat{w}\in \mathsf{W}(v)$, so that $\operatorname{diam}(\mathsf{W}(v))=1$ in 
our example.  

For a simple game $v$ the set $\mathsf{W}(v)$ is non-empty iff $v$ is a so-called roughly weighted game, which is a relaxation 
of a weighted game. While also for a weighted game $v$ not any element in $\mathsf{W}(v)$ can be completed by a suitable quota $q\in(0,1]$ to a normalized representation $(q;w)$, 
Definition~\ref{def_weighted_polytope} makes sense nevertheless since $\dim(\mathsf{W}(v))=n-1$, see 
e.g.~\cite[Lemma 3.4]{kaniovski2015representation}, i.e., the weight polytope is full-dimensional. More concretely, for each weighted game $v$ and each $\varepsilon\in\mathbb{R}_{>0}$ 
there are $w,w'\in \mathsf{W}(v)$ and $q,q'\in(0,1]$ such that $v=[q;w]=[q';w']$ and 
$$
  \operatorname{diam}(\mathsf{W}(v))-\varepsilon\le \Vert w-w'\Vert_1  \le \operatorname{diam}(\mathsf{W}(v)).
$$
Given the indicated linear programming formulation, $\operatorname{diam}(\mathsf{W}(v))$ can be computed in polynomial time (in terms of the 
number of minimal winning and maximal losing coalitions). The same is true if we replace $\Vert\cdot\Vert_1$ by the maximum norm 
$\Vert x\Vert_\infty=\max\{x_i \mid 1\le i\le n\}$ for $x\in\mathbb{R}^n$. We denote the corresponding diameter by  
$\operatorname{diam}^\infty(\mathsf{W}(v))$. For an arbitrary $p$-norm $\Vert x\Vert_p:=\left(\sum_i x_i^p\right)^{1/p}$ with $1<p<\infty$, we can obtain lower and upper 
bounds via $\Vert x\Vert_\infty\le\Vert x\Vert_p\le\Vert x\Vert_1$, so that we restrict ourselves to the corresponding two distance functions. The bound 
$\Vert x\Vert_{\infty}\le\Vert x\Vert_1$ can be slightly improved in our context.

\begin{Lemma}
  \label{lemma_improved_relation_infty_1}
  For $w,w'\in\mathbb{R}^n_{\ge 0}$ with $\Vert w\Vert_1=\Vert w'\Vert_1=1$, we have 
  $\Vert w-w'\Vert_{\infty}\le \frac{1}{2}\Vert w-w'\Vert_1$.
\end{Lemma}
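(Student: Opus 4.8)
The plan is to reduce everything to the single structural fact that the coordinates of the difference vector sum to zero. First I would set $d := w - w'$ and observe that, since $w, w' \in \mathbb{R}^n_{\ge 0}$, we have $\sum_{i=1}^n w_i = \Vert w\Vert_1 = 1 = \Vert w'\Vert_1 = \sum_{i=1}^n w'_i$, and therefore $\sum_{i=1}^n d_i = 0$. This is the only place the hypothesis of non-negativity enters; without it the claim is false (take $n=1$, $w=(1)$, $w'=(-1)$).

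Next I would partition the index set by the sign of the coordinates: let $P := \sum_{i :\, d_i > 0} d_i$ denote the total positive part of $d$. Because $\sum_i d_i = 0$, the total negative part satisfies $\sum_{i :\, d_i < 0} (-d_i) = P$ as well, and hence $\Vert d\Vert_1 = \sum_{i=1}^n |d_i| = 2P$.

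Finally, I would bound a single coordinate. Fix $i$. If $d_i = 0$ the inequality $|d_i| \le P$ is trivial; otherwise $|d_i|$ occurs as one of the non-negative summands of either $\sum_{j :\, d_j > 0} d_j$ or $\sum_{j :\, d_j < 0} (-d_j)$, both of which equal $P$, so again $|d_i| \le P$. Taking the maximum over $i$ yields $\Vert w - w'\Vert_\infty = \max_i |d_i| \le P = \tfrac{1}{2}\Vert w - w'\Vert_1$, which is the assertion.

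There is essentially no obstacle; the argument is a short two-line computation once one notices the coordinate sum vanishes. The only subtlety worth stating explicitly in the writeup is that the non-negativity assumption is genuinely needed, and that equality $\Vert w-w'\Vert_\infty = \tfrac12\Vert w-w'\Vert_1$ occurs exactly when $d$ is supported on two coordinates with opposite signs and equal magnitude (e.g. the example $w=(\tfrac12,\tfrac12,0)$, $w'=(\tfrac12,0,\tfrac12)$ from the game $[2;1,1,1]$ above), so the bound is tight.
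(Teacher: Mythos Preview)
Your argument is correct and is essentially identical to the paper's: both partition the indices by the sign of $w_i-w'_i$, use $\Vert w\Vert_1=\Vert w'\Vert_1$ (together with non-negativity) to conclude that the positive and negative parts of $w-w'$ have equal mass $P$, and then bound any single $|d_i|$ by $P=\tfrac12\Vert w-w'\Vert_1$. One small remark on your closing aside: your characterization of the equality case is too restrictive, since e.g.\ $d=(2,-1,-1)$ (arising from $w=(\tfrac12,0,0)$, $w'=(0,\tfrac14,\tfrac14)$ after normalization) also attains equality; the correct condition is that all of the positive mass or all of the negative mass of $d$ is concentrated in a single coordinate.
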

\begin{proof}
  With $S:=\{1\le i\le n\mid w_i\le w'_i\}$ and $A:=\sum_{i\in S} \left(w'_i-w_i\right)$, 
  $B:=\sum_{i\in N\backslash S} \left(w_i-w'_i\right)$, where $N=\{1,\dots,n\}$, we have $A-B=0$ since $\Vert w\Vert_1=\Vert w'\Vert_1$ 
  and $w,w'\in\mathbb{R}_{\ge 0}^n$. Thus, $\Vert w-w'\Vert_1=2A$ and $\Vert w-w'\Vert_{\infty}\le \max\{A,B\}=A$.\hfill{$\square$}
\end{proof}

What can be said about $\operatorname{diam}(\mathsf{W}(v))$ and $\operatorname{diam}^\infty(\mathsf{W}(v))$ in general without solving 
the specific linear programs? Obviously, we have $\operatorname{diam}(\mathsf{W}(v))\le 2$ and $\operatorname{diam}^\infty(\mathsf{W}(v))\le 1$. 
These bounds are asymptotically attained for $n\ge 2$ and $v=[n;(1,\dots,1)]$, i.e., for any $0<\varepsilon<\frac{1}{n}$ we can set 
$w=(1-(n-1)\cdot\varepsilon,\dots,\varepsilon)$, $w'=(\varepsilon,\dots,\varepsilon,1-(n-1)\cdot \varepsilon)$, $q=q'=1-\varepsilon$ so that 
$v=[q;w]=[q';w']$, $\Vert w-w'\Vert_1=2\cdot(1-n\varepsilon)$, and $\Vert w-w'\Vert_\infty=1-2\varepsilon$. In other words, 
$(1,0,\dots,0),(0,\dots,0,1)\in \mathsf{W}([n;1,\dots,1])$ attain the desired distances. For the weighted game $v$ with 
$n=1$ players we have  $\operatorname{diam}(\mathsf{W}(v))=\operatorname{diam}^\infty(\mathsf{W}(v))=0$ since $\mathsf{W}(v)=\{(1)\}$.

In order to obtain tighter bounds for the diameter of the weight polytope we need more information besides the number of players.
Given an exemplary normalized representation $(q;w)$, we study \textit{key parameters} like the relative quota $q\in(0,1]$, i.e., the quota 
of a normalized representation, or the maximum 
relative weight $\Delta(w):=\Vert w\Vert_\infty\in(0,1]$, where we write $\Delta$ whenever $w$ is clear from the context.  
Besides this, also more sophisticated invariants of weight vectors have been  studied in applications. The so-called \textit{Laakso-Taagepera index} 
a.k.a.\ \textit{Herfindahl-Hirschman index}, c.f.\ \cite{laakso1981proportional}, is used in Industrial Organization to measure the concentration of 
firms in a market, see, e.g., \cite{curry1983industrial}, and given by
$$
    L(w)
    =\left(\sum\limits_{i=1}^{n}w_i\right)^2 / \sum\limits_{i=1}^{n} w_i^2.
$$
for $w\in\mathbb{R}_{\ge 0}^n$ with $w\neq 0$. In general we have $1\le L(w)\le n$. If the weight vector $w$ is normalized, then the formula simplifies to 
$L(w)=1/\sum_{i=1}^n w_i^2$. Under the name {\lq\lq}effective number of parties{\rq\rq} the index is widely used in political science 
to measure party fragmentation, see, e.g., \cite{laakso1979effective}.  
However, we observe the following relations between the maximum relative weight 
$\Delta=\Delta(w)$ and the Laakso-Taagepera index $L(w)$:
\begin{Lemma}
  \label{lemma_relation_maximum_laakso_taagepera}
  For $w\in\mathbb{R}_{\ge 0}^n$ with $\Vert w\Vert_1=1$, we have
  $$
    \frac{1}{\Delta}\le
    \frac{1}{\Delta\left(1-\alpha(1-\alpha)\Delta\right)}\le L(w)\le
    \frac{1}{\Delta^2+\frac{(1-\Delta)^2}{n-1}}\le\frac{1}{\Delta^2}
  $$
  for $n\ge 2$, where $\alpha:=\frac{1}{\Delta}-\left\lfloor\frac{1}{\Delta}\right\rfloor\in[0,1)$. 
  If $n=1$, then  $\Delta=L(w)=1$. 
\end{Lemma}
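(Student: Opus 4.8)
The plan is to pass to $\sigma:=\sum_{i=1}^n w_i^2$ and use that $L(w)=1/\sigma$ for a normalized weight vector; since all four quantities in the asserted chain are positive, taking reciprocals shows the chain is equivalent to
\[
  \Delta \;\ge\; \Delta\bigl(1-\alpha(1-\alpha)\Delta\bigr) \;\ge\; \sigma \;\ge\; \Delta^2+\frac{(1-\Delta)^2}{n-1} \;\ge\; \Delta^2 .
\]
The case $n=1$ is trivial, since then $w=(1)$ and $\Delta=L(w)=1$. For $n\ge 2$ the two outer inequalities are immediate: $\alpha(1-\alpha)\Delta\ge 0$ because $\alpha\in[0,1)$ and $\Delta>0$, and $(1-\Delta)^2/(n-1)\ge 0$. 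So the content is the pair of estimates on $\sigma$, and throughout I would fix an index with $w_i=\Delta$, say $w_1=\Delta$, so that $w_2,\dots,w_n\ge 0$ with $\sum_{i=2}^n w_i=1-\Delta$.

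For the lower bound on $\sigma$, I would apply the Cauchy--Schwarz (equivalently the QM--AM) inequality to the last $n-1$ coordinates, $\sum_{i=2}^n w_i^2\ge\frac{1}{n-1}\bigl(\sum_{i=2}^n w_i\bigr)^2=\frac{(1-\Delta)^2}{n-1}$, and add $w_1^2=\Delta^2$; equality holds exactly when the non-maximal weights are all equal.

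For the upper bound on $\sigma$, the key observation is that $w$ lies in the polytope $P=\{x\in\mathbb{R}_{\ge 0}^n : \sum_i x_i=1,\ x_i\le\Delta\ \text{for all }i\}$, and that $x\mapsto\sum_i x_i^2$ is convex, hence attains its maximum over $P$ at an extreme point. One can pin down the relevant extreme point directly by a smoothing argument: if a vector of $P$ had two coordinates both in the open interval $(0,\Delta)$, shifting a small amount of mass from the smaller to the larger would keep the vector in $P$ and strictly increase $\sum_i x_i^2$; so a maximizer has at most one coordinate strictly between $0$ and $\Delta$ and all others in $\{0,\Delta\}$. After reordering such a vector is $(\Delta,\dots,\Delta,t,0,\dots,0)$ with $m$ copies of $\Delta$ and $0\le t<\Delta$, and $m\Delta+t=1$ forces $m=\lfloor 1/\Delta\rfloor$ and $t=\bigl(1/\Delta-\lfloor 1/\Delta\rfloor\bigr)\Delta=\alpha\Delta$, whence
\[
  \sigma\;\le\; m\Delta^2+t^2\;=\;\Bigl(\frac1\Delta-\alpha\Bigr)\Delta^2+\alpha^2\Delta^2\;=\;\Delta-\alpha(1-\alpha)\Delta^2\;=\;\Delta\bigl(1-\alpha(1-\alpha)\Delta\bigr).
\]
I expect the only delicate point to be making this reduction to the extreme-point configuration airtight, in particular checking that the degenerate cases ($1/\Delta\in\mathbb{Z}$, or $n=\lfloor 1/\Delta\rfloor$ with $w$ already the uniform vector) really are covered by the same formula; invoking the standard fact that a convex function on a polytope is maximized at a vertex, together with the elementary list of the vertices of $P$, settles these uniformly.
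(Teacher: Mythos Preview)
Your argument is correct and follows essentially the same route as the paper: both reduce to bounding $\sigma=\sum_i w_i^2$ and use an exchange/smoothing step to identify the maximizer as $(\Delta,\dots,\Delta,\alpha\Delta,0,\dots,0)$. The only cosmetic difference is that for the lower bound on $\sigma$ you invoke Cauchy--Schwarz on the non-maximal coordinates directly, whereas the paper arrives at the same equal-weights minimizer via an averaging/swapping argument; these are equivalent derivations of the same inequality.
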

\begin{proof}
  Optimize $\sum\limits_{i=1}^n w_i^2$ with respect to the constraints $w\in\mathbb{R}^n$, $\Vert w\Vert_1=1$, and $\Delta(w)=\Delta$, 
  see the appendix 
  for the technical details.
\end{proof} 

So, any lower or upper bound involving $L(w)$ can be replaced by a bound involving $\Delta$ instead. Since $\Delta$ has nicer analytical 
properties and requires less information on $w$, we stick to $\Delta$ in the following. We remark that there are similar inequalities for 
other indices measuring market concentration. Upper bounds on $\operatorname{diam}(\mathsf{W}(v))$, in terms of $n$, $q$, and $\Delta$, 
will be given in Section~\ref{sec_upper_bounds} and worst case lower bounds for $\operatorname{diam}(\mathsf{W}(v))$ and 
$\operatorname{diam}^\infty(\mathsf{W}(v))$ will be given in Section~\ref{sec_lower_bounds}.

\section{Worst case lower bounds for the diameter of the weight polytope}
\label{sec_lower_bounds}

For integers $1\le k\le s$ and $t\ge 0$ we denote by $v_{k,s,t}$ the weighted game with $s$ players of weight one, $t$ players of weight 
zero, and a quota of $k$, i.e., $v_{k,s,t}=[k;1,\dots,1,0,\dots 0]$. Players $1,\dots,s$ are pairwise equivalent as well as players $s+1,\dots s+t$, 
which are null players. If $k=1$, then each player $1\le i\le s$ is a passer. First we study lower bounds for the diameter of those weighted games.

\begin{Lemma}
  \label{lemma_diameter}
  For integers $1\le k< s$ and $t\ge 0$ we have 
  $$
    \operatorname{diam}(\mathsf{W}(v_{k,s,t}))\ge \max\left\{\tfrac{1}{10k},\tfrac{1}{10(s-k)}\right\}
    \quad\text{and}\quad 
    \operatorname{diam}^\infty(\mathsf{W}(v_{k,s,t}))\ge \tfrac{1}{s}.
  $$  
\end{Lemma}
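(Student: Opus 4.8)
The plan is to exhibit, for each of the two stated bounds, an explicit pair of vectors in $\mathsf{W}(v_{k,s,t})$ realizing at least that distance. First I would record the shape of the polytope: for $v_{k,s,t}$ the minimal winning coalitions are exactly the $k$-element subsets of $\{1,\dots,s\}$, while the maximal losing coalitions are exactly the sets consisting of $k-1$ of the players $1,\dots,s$ together with all $t$ null players. Hence, writing $\sigma(w)$ for the sum of the $k$ smallest and $\tau(w)$ for the sum of the $k-1$ largest among $w_1,\dots,w_s$, we have $w\in\mathsf{W}(v_{k,s,t})$ iff $w\in\mathbb{R}^{s+t}_{\ge 0}$, $\sum_i w_i=1$, and $\sigma(w)\ge\tau(w)+\sum_{j>s}w_j$. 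For the maximum norm I would take $u:=(\tfrac{1}{s},\dots,\tfrac{1}{s},0,\dots,0)$, which lies in $\mathsf{W}(v_{k,s,t})$ because $\tfrac{k}{s}\ge\tfrac{k-1}{s}$, and $u':=(0,\tfrac{1}{s-1},\dots,\tfrac{1}{s-1},0,\dots,0)$, which lies in $\mathsf{W}(v_{k,s,t})$ because for $u'$ both the sum of the $k$ smallest and the sum of the $k-1$ largest among the first $s$ coordinates equal $\tfrac{k-1}{s-1}$ (using $k\le s-1$). These differ by $\tfrac{1}{s}$ in the first coordinate, so $\operatorname{diam}^\infty(\mathsf{W}(v_{k,s,t}))\ge\tfrac{1}{s}$.

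For the $\Vert\cdot\Vert_1$-bound the key observation is that displacing a single coordinate of $u$ respects feasibility only up to $O(1/s)$, whereas displacing roughly half of the players $1,\dots,s$ by a tiny amount each yields a displacement of order $1/k$. Concretely, assume first $s\ge 2k-1$, put $a:=\lfloor s/2\rfloor$ (so that $k-1\le a\le s-k$), and let $w$ have entry $x:=\tfrac{k}{a+s(k-1)}$ on players $1,\dots,a$, entry $y:=\tfrac{k-1}{a+s(k-1)}$ on players $a+1,\dots,s$, and $0$ on the null players. One checks that $w$ sums to $1$ and lies in $\mathsf{W}(v_{k,s,t})$, since the constraint reads $\sigma(w)=ky=(k-1)x=\tau(w)$. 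Because $w$ and $u$ both sum to $1$, we get $\Vert w-u\Vert_1=2a(x-\tfrac{1}{s})=\tfrac{2a(s-a)}{s(a+s(k-1))}$, and from $a(s-a)\ge\tfrac{s^2-1}{4}$ together with $a+s(k-1)\le\tfrac{s(2k-1)}{2}$ this is at least $\tfrac{s^2-1}{s^2(2k-1)}=\tfrac{1}{2k-1}\bigl(1-\tfrac{1}{s^2}\bigr)$. As $s\ge 2k-1$ forces $s\ge 2$ (and $s\ge 3$ once $k\ge 2$), a short elementary estimate turns this into $\operatorname{diam}(\mathsf{W}(v_{k,s,t}))\ge\tfrac{3}{4}$ for $k=1$ and $\operatorname{diam}(\mathsf{W}(v_{k,s,t}))\ge\tfrac{1}{10(k-1)}$ for $k\ge 2$; since $s\ge 2k-1$ also forces $s-k\ge k-1$, this covers $\max\{\tfrac{1}{10k},\tfrac{1}{10(s-k)}\}$ whenever $s\ge 2k-1$.

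It remains to treat the case $s\le 2k-2$, which forces $k\ge 3$ and $s-k\le k-2<k$. Here I would dualize: manipulating the inequality above via $\sum_{i\le s}w_i=1-\sum_{j>s}w_j$ and the fact that the $k$ smallest and the $s-k$ largest among $w_1,\dots,w_s$ together exhaust $w_1,\dots,w_s$, one sees $\mathsf{W}(v_{k,s,t})=\mathsf{W}(v_{s-k+1,s,t})$, hence $\operatorname{diam}(\mathsf{W}(v_{k,s,t}))=\operatorname{diam}(\mathsf{W}(v_{s-k+1,s,t}))$. The game $v_{s-k+1,s,t}$ has parameters $2\le s-k+1<s$ and satisfies $s\ge 2(s-k+1)-1$, so the previous paragraph applies to it and yields $\operatorname{diam}(\mathsf{W}(v_{s-k+1,s,t}))\ge\tfrac{1}{10((s-k+1)-1)}=\tfrac{1}{10(s-k)}$, which equals $\max\{\tfrac{1}{10k},\tfrac{1}{10(s-k)}\}$ in this regime. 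The step I expect to be the most delicate is calibrating the two-level construction: since dualization sends $k$ to $s-k+1$, one must make the direct estimate beat $\tfrac{1}{10(k-1)}$ rather than merely $\tfrac{1}{10k}$ in order to recover $\tfrac{1}{10(s-k)}$ after the swap — the factor $\tfrac{1}{2k-1}$ produced by the balanced vector is comfortably large enough, but the correction $1-\tfrac{1}{s^2}$ must be retained and the small cases $k\in\{2,3\}$ checked by hand.
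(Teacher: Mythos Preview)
Your proposal is correct, and its overall architecture differs from the paper's in two notable ways.

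For the $\ell_\infty$ bound, the paper perturbs exactly two of the uniform weights by $\pm\tfrac{1}{2s}$ and swaps them; you instead set one of the first $s$ weights to zero and rescale the remaining $s-1$. Both are one-line constructions giving the same $\tfrac{1}{s}$.

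For the $\ell_1$ bound, the paper treats the regimes $k\le\tfrac{s+1}{2}$ and $k\ge\tfrac{s+2}{2}$ separately, in each case choosing a perturbation size $\gamma$ and comparing the perturbed vector with its mirror permutation. You carry out a direct two-level construction only in the regime $s\ge 2k-1$ (comparing to the uniform vector rather than to a mirror), and then reduce the complementary regime via the polytope identity $\mathsf{W}(v_{k,s,t})=\mathsf{W}(v_{s-k+1,s,t})$, which follows from $\sigma_k+\tau_{s-k}=\tau_{k-1}+\sigma_{s-k+1}=w(S)$. This duality is a genuinely different idea: it explains structurally why the bound is symmetric in $k$ and $s-k$, and it spares you the second case's inequality-chasing. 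The price is that, because dualization sends $k\mapsto s-k+1$, your direct estimate has to reach $\tfrac{1}{10(k-1)}$ rather than merely $\tfrac{1}{10k}$; you identified this calibration issue correctly, and the inequality $\tfrac{1-1/s^2}{2k-1}\ge\tfrac{1}{10(k-1)}$ does hold for all $k\ge 2$ (it reduces to $62k\ge 71$ once one uses $s\ge 3$), so no hand checks beyond $k=1$ are actually needed. The paper's approach, by contrast, is entirely self-contained and never appeals to the dual, at the cost of a second explicit verification.
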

\begin{proof}
   Let $S=\{1,\dots s\}$ and $T=\{s+1,\dots,s+t\}$. We start with the lower bound for $\operatorname{diam}(\mathsf{W}(v_{k,s,t}))$. 
   If $s$ is even, then we set $S_1=\{1,\dots,s/2\}$, $S_0=\emptyset$, and $S_{-1}=\{s/2+1,\dots,s\}$.  
  If $s$ is odd, then we set $S_{1}=\{1,\dots,(s-1)/2\}$, $S_0=\{(s+1)/2\}$, and $S_{-1}=\{(s+3)/2,\dots,s\}$. Let $0\le \gamma\le\frac{1}{s}$ 
  be a parameter that we specify latter depending on further case differentiations. With this, we set $w_i=\frac{1}{s}+\gamma$ for all $i\in S_1$, 
  $w_i=\frac{1}{s}$ for all $i\in S_0$, $w_i=\frac{1}{s}-\gamma$ for all $i\in S_{-1}$, $w_i=\bar{w}_i=0$ for all $i\in T$, and $\bar{w_i}=w_{s+1-i}$ for all $i\in S$. It is easily 
  verified that $w\in\mathbb{R}_{\ge 0}^{s+t}$ and $\Vert w\Vert_1=1$. In order to conclude $w\in \mathsf{W}(v_{k,s,t})$ it suffices to 
  check $w(U)+w(T)=w(U)\le w(V)$ for all $U,V\subseteq S$ with $|U|=k-1$ and $|V|=k$. Since $\bar{w}$ is a permutation of $w$, $w\in \mathsf{W}(v_{k,s,t})$  
  implies $\bar{w}\in \mathsf{W}(v_{k,s,t})$, so that 
  $$
    \operatorname{diam}(\mathsf{W}(v_{k,s,t}))\ge \Vert w-\bar{w}\Vert_1=2\gamma\cdot |S_1|=2\gamma\cdot\left\lfloor\frac{s}{2}\right\rfloor 
    \ge \frac{\gamma s}{2},
  $$ 
  where we have used $s\ge 2$ for the last inequality.
   
  If $k\le \frac{s+1}{2}$ we set $\gamma=\frac{1}{s(2k-1)}\le \frac{1}{s}$. For $U,V\subseteq S$ with $|U|=k-1$ and $|V|=k$ 
  we have $w(U)\le (k-1)\cdot \left(\frac{1}{s}+\gamma\right)$ and $w(V)\ge k\cdot\left(\frac{1}{s}-\gamma\right)$ so that $w(U)\le w(V)$ 
  and $\operatorname{diam}(\mathsf{W}(v_{k,s,t}))\ge\tfrac{1}{4k}\ge \tfrac{1}{10(s-k)}$.
  
  If $k\ge \frac{s+2}{2}$ we set $\gamma=\frac{1}{s(2s+3-2k)}\le\frac{1}{s}$. For $U,V\subseteq S$ with $|U|=k-1$ and $|V|=k$ 
  we have 
  $$
    w(U)\le \frac{s}{2}\cdot \left(\frac{1}{s}+\gamma\right) +\frac{1}{s}+ \left(k-1-\frac{s}{2}-1\right)\cdot \left(\frac{1}{s}-\gamma\right)
  $$
  and
  $$
    w(V)\ge \frac{s}{2}\cdot \left(\frac{1}{s}-\gamma\right) +\frac{1}{s}+ \left(k-\frac{s}{2}-1\right)\cdot \left(\frac{1}{s}+\gamma\right)
  $$
  so that $w(U)\le w(V)$ and 
  $$
    \operatorname{diam}(\mathsf{W}(v_{k,s,t}))\ge \frac{\gamma s}{2}\ge \frac{1}{2(2(s-k)+3)}\overset{s-k\ge 1}{\ge} \frac{1}{10(s-k)}\ge \frac{1}{10k}.
  $$
  
  Next we consider the lower bound for $\operatorname{diam}^\infty(\mathsf{W}(v_{k,s,t}))$. We set $\gamma=\frac{1}{2s}$, $w_1=\bar{w}_2=\tfrac{1}{s}+\gamma$, 
  $w_2=\bar{w}_1=\tfrac{1}{s}-\gamma$, $w_i=\bar{w}_i=\frac{1}{s}$ for all $3\le i\le s$, and $w_i=\bar{w}_i=0$ for all $i\in T$. It is easily 
  verified that $w\in\mathbb{R}_{\ge 0}^{s+t}$ and $\Vert w\Vert_1=1$. In order to conclude $w\in \mathsf{W}(v_{k,s,t})$ it suffices to 
  check $w(U)+w(T)=w(U)\le w(V)$ for all $U,V\subseteq S$ with $|U|=k-1$ and $|V|=k$. The latter follows from $w(U)\le \frac{k-1}{s}+\gamma$ and 
  $w(V)\ge \frac{k}{s}-\gamma$. Since $\bar{w}$ is a permutation of $w$, we also have $\bar{w}\in \mathsf{W}(v_{k,s,t})$, so that
  $$
    \operatorname{diam}^\infty(\mathsf{W}(v_{k,s,t}))\ge \Vert w-\bar{w}\Vert_{\infty}=2\gamma=\frac{1}{s}.
  $$
\end{proof}

For the excluded cases $k=s$ we have:
\begin{Lemma}
  \label{lemma_diameter2}
  For integers $s\ge 1$ and $t\ge 0$ with $t+s\ge 2$ we have 
  $$
  \operatorname{diam}(\mathsf{W}(v_{s,s,t}))\ge \frac{2}{3}
    \quad\text{and}\quad 
    \operatorname{diam}^\infty(\mathsf{W}(v_{s,s,t}))\ge \frac{1}{3}.
  $$ 
\end{Lemma}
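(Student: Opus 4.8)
The plan is to make the weight polytope $\mathsf{W}(v_{s,s,t})$ completely explicit and then exhibit two of its points whose $\Vert\cdot\Vert_1$- and $\Vert\cdot\Vert_\infty$-distances equal $1$ and $\tfrac12$, respectively; since $1\ge\tfrac23$ and $\tfrac12\ge\tfrac13$, this establishes both bounds at once, with no real optimization involved.

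First I would identify the minimal winning and maximal losing coalitions of $v_{s,s,t}$. Writing $S=\{1,\dots,s\}$ and $T=\{s+1,\dots,s+t\}$, a coalition is winning iff it contains $S$, so $\mathcal{W}^m(v_{s,s,t})=\{S\}$ and $\mathcal{L}^m(v_{s,s,t})=\{(S\setminus\{i\})\cup T\mid i\in S\}$. For a normalized $w\in\mathbb{R}_{\ge 0}^{s+t}$ the inequality $w(S)\ge w\bigl((S\setminus\{i\})\cup T\bigr)$ becomes, after cancelling the summands $w_j$ with $j\in S\setminus\{i\}$ occurring on both sides, just $w_i\ge w(T)$. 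Hence
\[ \mathsf{W}(v_{s,s,t})=\bigl\{\,w\in\mathbb{R}_{\ge 0}^{s+t}\mid w(N)=1,\ w_i\ge w(T)\ \text{for all}\ i\in S\,\bigr\}. \]

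Next I would take $w:=(1,0,\dots,0)$ and $w':=(\tfrac12,\tfrac12,0,\dots,0)$, i.e.\ all mass on player~$1$ in the first case and mass $\tfrac12$ on each of players~$1$ and~$2$ in the second. Both vectors are nonnegative and sum to~$1$. For $w$ the constraints read $w_i\ge w(T)=0$ and hold trivially. For $w'$ one checks: if $s\ge 2$ then $T\subseteq\{3,\dots,s+t\}$, so $w'(T)=0$ and each $w'_i\ge 0$; if $s=1$ then $s+t\ge 2$ forces $t\ge 1$, so $T=\{2,\dots,1+t\}$, $w'(T)=\tfrac12$, and the single constraint $w'_1=\tfrac12\ge\tfrac12=w'(T)$ holds with equality. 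In either case $w,w'\in\mathsf{W}(v_{s,s,t})$, and $\Vert w-w'\Vert_1=\tfrac12+\tfrac12=1$ while $\Vert w-w'\Vert_\infty=\tfrac12$, which yields $\operatorname{diam}(\mathsf{W}(v_{s,s,t}))\ge 1\ge\tfrac23$ and $\operatorname{diam}^\infty(\mathsf{W}(v_{s,s,t}))\ge\tfrac12\ge\tfrac13$.

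No step here is a genuine obstacle; the only point deserving attention is the case $s=1$, where $t\ge 1$ is forced and the second coordinate of $w'$ lies in $T$, so that feasibility of $w'$ relies on the tight identity $w'_1=w'(T)$ rather than on a slack inequality. (When $s\ge 2$ one could instead use $(1,0,\dots,0)$ and $(0,1,0,\dots,0)$, which are feasible since then the only constraints are $w_i\ge 0$, and obtain the stronger values $\operatorname{diam}=2$ and $\operatorname{diam}^\infty=1$; the uniform pair above is chosen only so that the case $s=1$ requires no separate treatment.)
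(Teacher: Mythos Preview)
Your proof is correct and follows essentially the same strategy as the paper's: exhibit two explicit points of $\mathsf{W}(v_{s,s,t})$ whose $\Vert\cdot\Vert_1$- and $\Vert\cdot\Vert_\infty$-distances meet the required bounds. The paper separates the cases $s\ge 2$ (using permuted near-Dirac vectors $(1-(s-1)\varepsilon,\varepsilon,\dots,\varepsilon,0,\dots,0)$ to get asymptotic distances $2$ and $1$) and $s=1$ (using $(1,0,\dots,0)$ and $(\tfrac23,\tfrac13,0,\dots,0)$ to get exactly $\tfrac23$ and $\tfrac13$). Your single pair $(1,0,\dots,0)$ and $(\tfrac12,\tfrac12,0,\dots,0)$ handles all cases at once and, via the explicit description $\mathsf{W}(v_{s,s,t})=\{w\ge 0:\ w(N)=1,\ w_i\ge w(T)\ \forall i\in S\}$, makes the feasibility checks transparent; for $s=1$ it even yields the sharper values $1$ and $\tfrac12$ in place of the paper's $\tfrac23$ and $\tfrac13$.
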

\begin{proof}
  Let $0<\varepsilon<\tfrac{1}{s}$ be arbitrary. If $s\ge 2$ we choose $w_1=\bar{w}_s=1-(s-1)\varepsilon$, $w_i=\bar{w}_{s+1-i}=\varepsilon$ for all $2\le i\le s$, 
  and $w_i=\bar{w}_i=0$ for all $s+1\le i\le s+t$. We can easily check $w,\bar{w}\in \mathsf{W}(v_{s,s,t})$. Since $\Vert w-\bar{w}\Vert_1=2\cdot (1-s\varepsilon)$
  and $\Vert w-\bar{w}\Vert_{\infty}=1-s\varepsilon$ we have $\operatorname{diam}(\mathsf{W}(v_{s,s,t}))\ge \frac{2}{3}$ and 
  $\operatorname{diam}^\infty(\mathsf{W}(v_{s,s,t}))\ge \frac{1}{3}$ using $\varepsilon<\tfrac{2}{3s}$.
  
  If $s=1$ then we consider $w=(1,0,0,\dots, 0)\in \mathsf{W}(v_{1,1,t})$ and $\bar{w}=(\tfrac{2}{3},\tfrac{1}{3},0,\dots, 0)\in \mathsf{W}(v_{1,1,t})$. Thus,  
  $\operatorname{diam}(\mathsf{W}(v_{1,1,t}))\ge \Vert w-\bar{w}\Vert_1=\tfrac{2}{3}$ and 
  $\operatorname{diam}^\infty(\mathsf{W}(v_{1,1,t}))\ge \Vert w-\bar{w}\Vert_{\infty}=\tfrac{1}{3}$.
\end{proof}

Next we show that for a given relative quota $q\in(0,1]$ or a given maximum relative weight $\Delta\in(0,1]$ we can construct 
a weighted game $v$, for any suitably large number of players, with matching representation such that $\operatorname{diam}(\mathsf{W}(v))$ 
is lower bounded by a positive constant independent of $q$ or $\Delta$. Actually, we construct two representations of the same weighted 
game and give a lower bound for the distance between the two normalized weight vectors.

\begin{Lemma}
  \label{lemma_lb_approximation_q}
  For each $q\in(0,1]$ there exists a weighted game $v=[q;w]=[q;\bar{w}]$ with $n\ge 2$ players, where 
  $w,\bar{w}\in\mathbb{R}^n_{\ge 0}$, and $\Vert w\Vert_1=\Vert \bar{w}\Vert_1=1$, such that   
  $\Vert w-\bar{w}\Vert_{\infty}\ge \frac{1}{3}$ and $\Vert w-\bar{w}\Vert_{1}\ge \frac{2}{3}$. 
\end{Lemma}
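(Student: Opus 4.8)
The plan is to exhibit, for each $q$, an explicit weighted game on $n=2$ players together with two normalized representations $(q;w)$ and $(q;\bar w)$, splitting into the three ranges $q\in(0,\tfrac13]$, $q\in(\tfrac13,\tfrac23]$ and $q\in(\tfrac23,1]$. In every case it suffices to check $\|w-\bar w\|_\infty\ge\tfrac13$: since $\|w\|_1=\|\bar w\|_1=1$, Lemma~\ref{lemma_improved_relation_infty_1} gives $\|w-\bar w\|_1\ge 2\|w-\bar w\|_\infty\ge\tfrac23$ for free.

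For $q\le\tfrac13$ I would take $v=[1;1,1]$, the two-player game in which every non-empty coalition wins. As $q\le\tfrac12$, both $w=(q,1-q)$ and $\bar w=(1-q,q)$ lie in $\mathbb{R}^2_{\ge0}$, are normalized, and have every singleton of weight $\ge q$, so $v=[q;w]=[q;\bar w]$ and $\|w-\bar w\|_\infty=1-2q\ge\tfrac13$. For $q>\tfrac23$ I would take $v=[2;1,1]$, the two-player unanimity game, whose only winning coalition is $\{1,2\}$. Choosing $0<\eta<q-\tfrac23$ and $w=(q-\eta,1-q+\eta)$, $\bar w=(1-q+\eta,q-\eta)$, the bound $q>\tfrac12$ ensures that in each vector both singletons have weight strictly below $q$, hence $v=[q;w]=[q;\bar w]$, and $\|w-\bar w\|_\infty=2q-1-2\eta>\tfrac13$.

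The range $\tfrac13<q\le\tfrac23$ is the heart of the matter, because the symmetric two-player games above no longer do the job there; my construction would instead use the dictator game $v=[q;1,0]$, in which player~$1$ is a passer and player~$2$ a null player, so that the winning coalitions are exactly $\{1\}$ and $\{1,2\}$. The key observation is that, precisely because $\tfrac13<q\le\tfrac23$, the representation $(q;(\tfrac23,\tfrac13))$ describes the same game: $\{2\}$ has weight $\tfrac13<q$ and is losing, while $\{1\}$ has weight $\tfrac23\ge q$ and is winning. Taking $w=(1,0)$ and $\bar w=(\tfrac23,\tfrac13)$ then yields $\|w-\bar w\|_\infty=\tfrac13$; intuitively, one representation concentrates all mass on the dictator and the other pushes mass $\tfrac13$ onto the dummy.

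There is no single hard step here, but the care needed is in matching the three cases to the endpoints $\tfrac13$ and $\tfrac23$: the dictator construction breaks at $q=\tfrac13$ (there $\{2\}$ would become winning in $\bar w$), but at that value $1-2q=\tfrac13$ still works, while the dictator construction is fine at $q=\tfrac23$, which is why the unanimity game is only invoked for $q$ strictly above $\tfrac23$. One should also double-check in each case that $v$ is genuinely weighted (it is, by the explicit integer or $0/1$ weights given) and that the two representations really define the same simple game, by comparing their lists of winning coalitions.
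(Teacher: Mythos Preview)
Your proof is correct and follows essentially the same approach as the paper: the same three-way split of $(0,1]$ at $\tfrac13$ and $\tfrac23$, and the same underlying two-player games (both singletons winning for small $q$, dictator for the middle range, unanimity for large $q$). The only cosmetic difference is that the paper uses the fixed pair $w=(\tfrac23,\tfrac13)$, $\bar w=(\tfrac13,\tfrac23)$ in both outer ranges (and $w=(\tfrac23,\tfrac13)$, $\bar w=(1,0)$ in the middle), which spares the auxiliary parameter $\eta$ and gives exactly $\Vert w-\bar w\Vert_\infty=\tfrac13$, $\Vert w-\bar w\Vert_1=\tfrac23$ in every case.
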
 
\begin{proof}
  We give general constructions for different ranges of $q$:
  \begin{itemize}
    \item $\frac{2}{3}<q\le 1$: $w=\left(\frac{2}{3},\frac{1}{3},0,\dots,0\right)$, $\bar{w}=\left(\frac{1}{3},\frac{2}{3},0,\dots,0\right)$; 
    \item $\frac{1}{3}<q\le \frac{2}{3}$: $w=\left(\frac{2}{3},\frac{1}{3},0,\dots,0\right)$, $\bar{w}=\left(1,0,\dots,0\right)$;
    \item $0<q\le \frac{1}{3}$: $w=\left(\frac{2}{3},\frac{1}{3},0,\dots,0\right)$, $\bar{w}=\left(\frac{1}{3},\frac{2}{3},0,\dots,0\right)$.
  \end{itemize}
\end{proof}

\begin{Lemma}
  \label{lemma_lb_approximation_delta}
  Let $\Delta\in(0,1]$ and $n\ge \frac{1}{\Delta}+1$. There exist $w,\bar{w}\in\mathbb{R}_{\ge 0}^n$, $q,\bar{q}\in(0,1]$ with $\Vert w\Vert_1=\Vert \bar{w}\Vert_1=1$, 
  $\Delta(w)=\Delta$, $[q;w]=[\bar{q},\bar{w}]$, and $\tfrac{1}{2}\cdot\Vert w-\bar{w}\Vert_1\ge \Vert w-\bar{w}\Vert_\infty \ge \tfrac{1}{7}$.
\end{Lemma}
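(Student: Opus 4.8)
I would prove this by working with the game whose weight polytope is as large as possible. First, the inequality $\tfrac12\|w-\bar w\|_1\ge\|w-\bar w\|_\infty$ is free: it is exactly Lemma~\ref{lemma_improved_relation_infty_1} applied to $w,\bar w\in\mathbb{R}_{\ge0}^n$ with $\|w\|_1=\|\bar w\|_1=1$. So the whole task is to exhibit a weighted game $v$ on $n$ players together with feasible normalized representations $(q;w)$ and $(\bar q;\bar w)$ of it such that $\Delta(w)=\Delta$ and $\|w-\bar w\|_\infty\ge\tfrac17$. The key point is that for the \emph{unanimity game} $u_n=[n;1,\dots,1]$ (in which only the coalition $S=N$ wins) the weight polytope is the entire unit simplex: the unique minimal winning coalition is $N$, the maximal losing coalitions are the sets $N\setminus\{i\}$, and the defining constraints $1=w(N)\ge w(N\setminus\{i\})=1-w_i$ are just $w_i\ge0$. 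Furthermore every point $w$ of this simplex with all coordinates strictly positive is a feasible representation of $u_n$, with quota $q=1$: then $w(S)\le 1-\min_i w_i<1$ for every proper coalition $S\subsetneq N$. (By contrast the games $v_{k,s,t}$ are of no use here once $\Delta$ is small: pinning their maximal weight down to a small value forces $s$ to be large, which makes their weight polytope thin in $\ell_\infty$.)

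The plan for $\Delta<1$ is then to take $v=u_n$, $c:=\tfrac{1-\Delta}{n-1}$, $w:=(\Delta,c,c,\dots,c)$, and $\bar w:=(\varepsilon,\,1-(n-1)\varepsilon,\,\varepsilon,\dots,\varepsilon)$ with $\varepsilon:=\tfrac1{4(n-1)}$. The hypothesis $n\ge\tfrac1\Delta+1$ gives $0<c\le\Delta(1-\Delta)\le\tfrac14$, hence $c<\Delta$, so $w$ is a feasible normalized representation of $u_n$ with $\Delta(w)=\max\{\Delta,c\}=\Delta$; and $\bar w$ is another feasible representation of $u_n$ since $0<\varepsilon<\tfrac1{n-1}$. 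Comparing the first two coordinates (and using $\varepsilon<\Delta$, which follows from $n-1\ge\tfrac1\Delta$, together with $1-(n-1)\varepsilon=\tfrac34>c$) one gets
\[
  \|w-\bar w\|_\infty\ \ge\ \max\{\Delta-\varepsilon,\ 1-c-(n-1)\varepsilon\}\ \ge\ \max\{\Delta,\,1-c\}-(n-1)\varepsilon\ \ge\ \tfrac12-\tfrac14=\tfrac14,
\]
because $\max\{\Delta,1-c\}\ge\tfrac12$ (if $\Delta<\tfrac12$ then $c<\tfrac12$, so $1-c>\tfrac12$). Since $\tfrac14>\tfrac17$, this handles all $\Delta<1$.

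Finally one disposes of $\Delta=1$ separately — the unanimity game fails here because the only simplex point with a unit coordinate is a vertex, not a feasible representation of $u_n$. For $\Delta=1$ we have $n\ge2$; take $v$ to be the dictatorship of player~$1$ (players $2,\dots,n$ null), $w:=(1,0,\dots,0)$ (feasible with $\Delta(w)=1$), and $\bar w:=(\tfrac34,\tfrac14,0,\dots,0)$ (feasible with quota $\bar q=\tfrac34$), so that $\|w-\bar w\|_\infty=\tfrac14\ge\tfrac17$. The one real obstacle is the conceptual step in the first paragraph; once the unanimity game is in hand, the rest is a one-parameter perturbation carried out with a large margin (we reach $\tfrac14$, in fact nearly $\tfrac12$ for most $\Delta$), so the constant $\tfrac17$ is certainly not optimal for this construction.
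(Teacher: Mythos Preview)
Your argument is correct. The inequality $\tfrac12\|w-\bar w\|_1\ge\|w-\bar w\|_\infty$ is indeed Lemma~\ref{lemma_improved_relation_infty_1}, and your explicit choices of $w$ and $\bar w$ work in both the case $\Delta<1$ (via the full unanimity game $u_n$) and $\Delta=1$ (via the dictatorship). The computations go through: $c=(1-\Delta)/(n-1)\le\Delta(1-\Delta)\le\tfrac14$, so $\Delta(w)=\Delta$; both $w$ and $\bar w$ have all coordinates positive, hence represent $u_n$ with quota $1\in(0,1]$; and the bound $\|w-\bar w\|_\infty\ge\tfrac14$ follows as you wrote (in fact $1-c\ge\tfrac34$, so you even get $\tfrac12$).

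The paper argues differently. It takes $s=\lfloor 1/\Delta\rfloor$, sets $w=(\Delta,\dots,\Delta,1-s\Delta,0,\dots,0)$, recognises $[s\Delta;w]=v_{s,s,t}$, and then invokes Lemma~\ref{lemma_diameter2} ($\operatorname{diam}^\infty(\mathsf{W}(v_{s,s,t}))\ge\tfrac13$) together with the triangle inequality to obtain some $w'\in\mathsf{W}(v_{s,s,t})$ with $\|w-w'\|_\infty\ge\tfrac16$, finally perturbing $w'$ into the interior to reach $\tfrac17$. Your route is more elementary and self-contained: you write down $\bar w$ explicitly rather than appeal to a diameter lemma plus a non-constructive triangle-inequality step, and you obtain the sharper constant $\tfrac14$. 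The paper's route, on the other hand, reuses machinery already in place and keeps the weight vector $w$ of the specific $v_{k,s,t}$ form used throughout Section~\ref{sec_lower_bounds}.

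One small correction to your commentary: your parenthetical claim that the games $v_{k,s,t}$ are ``of no use here once $\Delta$ is small'' is not quite right. It is true for $k<s$ (Lemma~\ref{lemma_diameter} only gives $\operatorname{diam}^\infty\ge 1/s$), but for $k=s$ Lemma~\ref{lemma_diameter2} yields $\operatorname{diam}^\infty(\mathsf{W}(v_{s,s,t}))\ge\tfrac13$ uniformly in $s$, and this is precisely what the paper exploits. Your own game $u_n$ is just $v_{n,n,0}$, so you are in fact using a member of the same family. This does not affect the validity of your proof.
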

\begin{proof}
  We set $s=\left\lfloor\tfrac{1}{\Delta}\right\rfloor\ge 1$ and $t=n-s\ge 1$, since $n\ge \frac{1}{\Delta}+1\ge s+1$. For 
  $w=(\Delta,\dots,\Delta,1-s\Delta,0,\dots,0)\in\mathbb{R}_{\ge 0}^n$,  with $s$ entries being equal to $\Delta$, we have 
  $\Delta(w)=\Delta$ and $[q;w]=v_{s,s,t}$ for $0<q=s\Delta\le 1$. Due to Lemma~\ref{lemma_diameter2} we have 
  $\operatorname{diam}^\infty(\mathsf{W}(v_{s,s,t}))\ge \frac{1}{3}$,  so that the triangle inequality implies the existence of a vector $w'\in \mathsf{W}(v_{s,s,t})$ 
  with $\Vert w-w'\Vert_\infty\ge \frac{1}{6}$. If $w'$ is on the boundary  of $\mathsf{W}(v_{s,s,t})$ we slightly perturb $w'$ to $\bar{w}$ in the interior of 
  $\mathsf{W}(v_{s,s,t})$ and complete it to a representation $(\bar{q},\bar{w})$ with $\bar{q}\in(0,1]$, $[q;w]=[\bar{q},\bar{w}]$, and 
  $\Vert w-\bar{w}\Vert_\infty\ge \frac{1}{7}$. The inequality $\tfrac{1}{2}\cdot \Vert w-\bar{w}\Vert_1\ge \vert w-\bar{w}\Vert_\infty$ follows from 
  Lemma~\ref{lemma_improved_relation_infty_1}.
\end{proof}

By a tailored construction we can obtain a slightly more general result:
\begin{Lemma}
  \label{lemma_lb_approximation_delta_1}
  For each $\Delta\in(0,1)$ there exists a weighted game $v=[q;w]=[q;\bar{w}]$ with $n\ge \frac{4}{3\Delta}+6$ players, where 
  $q\in(0,1)$, $w,\bar{w}\in\mathbb{R}^n_{\ge 0}$, $\Delta(w)=\Delta(\bar{w})=\Delta$, and  
  $\Vert w\Vert_1=\Vert \bar{w}\Vert_1=1$, such that $\Vert w-\bar{w}\Vert_{1}\ge \frac{2}{3}$ and 
  $\Vert w-\bar{w}\Vert_{\infty}\ge\Delta/2$.
\end{Lemma}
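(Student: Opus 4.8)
The plan is to produce both representations inside the unanimity game $v:=[n;1,\dots,1]$. For this game $\mathcal{W}^m(v)=\{N\}$ and $\mathcal{L}^m(v)=\{N\setminus\{j\}\mid j\in N\}$, so the defining inequalities of $\mathsf{W}(v)$ reduce to $u_j\ge 0$; hence $\mathsf{W}(v)$ is the full standard simplex, and any strictly positive normalized vector $u$ is a feasible weight vector realising $v$ together with every quota $q\in(1-\min_i u_i,\,1)$ (since then $u(N\setminus\{j\})=1-u_j<q\le 1=u(N)$ for all $j$, while no proper coalition has larger weight). It therefore suffices to exhibit two strictly positive normalized vectors with maximum entry exactly $\Delta$ that are far apart and admit a common such quota.

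First I would fix the block size $b:=\left\lfloor\tfrac{1}{3\Delta}\right\rfloor+1$. By construction $b\ge 1$ and $b\Delta>\tfrac13$ (strictly, which is why I use floor-plus-one rather than a ceiling), and a one-line case split ($b=1$ when $\Delta>\tfrac13$, and $b\le\tfrac1{3\Delta}+1<\tfrac1\Delta$ when $\Delta\le\tfrac13$) gives $b<\tfrac1\Delta$, i.e.\ $b\Delta<1$. Feeding $n\ge\tfrac{4}{3\Delta}+6$ and $b\le\tfrac1{3\Delta}+1$ into elementary estimates yields $n-2b\ge 4$ and $n-b\ge\tfrac1\Delta+5$, hence $(n-b)\Delta\ge 1+5\Delta$. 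Now partition $\{1,\dots,n\}$ into $B:=\{1,\dots,b\}$, $B':=\{b+1,\dots,2b\}$ and a reservoir $R$ of size $n-2b\ge 1$, fix a small parameter $\delta>0$, set $\rho:=(1-b\Delta-b\delta)/(n-2b)$, and define $w$ by assigning weight $\Delta$ on $B$, weight $\delta$ on $B'$ and weight $\rho$ on $R$, and define $\bar w$ by interchanging the roles of $B$ and $B'$ (weight $\Delta$ on $B'$, weight $\delta$ on $B$, weight $\rho$ on $R$). Both vectors are non-negative and have $\Vert w\Vert_1=b\Delta+b\delta+(n-2b)\rho=1$.

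The next step is to pin down $\delta$. For $\delta$ small we get $0<\delta<\rho$: positivity of $\rho$ uses $b\Delta<1$, while $\rho<\Delta$ is equivalent to $b\delta<(n-b)\Delta-1$, and the right-hand side is $\ge 5\Delta>0$. Consequently $\Delta(w)=\Delta(\bar w)=\Delta$, the maximum being attained precisely on $B$, respectively $B'$. Since $w$ and $\bar w$ coincide on $R$ and differ by $\Delta-\delta$ in each of the $2b$ coordinates of $B\cup B'$, we obtain $\Vert w-\bar w\Vert_1=2b(\Delta-\delta)$ and $\Vert w-\bar w\Vert_\infty=\Delta-\delta$. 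Choosing $\delta$ additionally so small that $2b(\Delta-\delta)\ge\tfrac23$ (possible because $2b\Delta>\tfrac23$ strictly) and $\delta\le\Delta/2$ delivers the two announced distance bounds. Finally $\min_i w_i=\min_i\bar w_i=\delta$, so $q:=1-\delta/2\in(0,1)$ is admissible for both, and $v=[q;w]=[q;\bar w]$ has all the required properties.

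The only genuinely delicate point is checking that the single hypothesis $n\ge\tfrac{4}{3\Delta}+6$ suffices to run all of this at once: it must force $B,B',R$ disjoint with $R$ non-empty ($2b\le n$); it must keep $\rho$ strictly below $\Delta$ with room to spare, so that $\Delta(w)=\Delta$ is actually attained and not exceeded (this is where the margin $(n-b)\Delta\ge 1+5\Delta$ is spent); and it must leave $2b\Delta$ strictly above $\tfrac23$, so the $\ell_1$-gap survives the $\delta$-perturbation. All of these collapse to the crude bounds $1\le b\le\tfrac1{3\Delta}+1$ and $b<\tfrac1\Delta$, but the borderline situations $\Delta$ close to $1$ (where $b=1$) and $\tfrac1{3\Delta}\in\mathbb{Z}$ (where the floor-plus-one in the definition of $b$ is exactly what preserves the strict inequality $b\Delta>\tfrac13$) deserve a moment's care.
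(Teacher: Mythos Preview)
Your proof is correct and takes a genuinely different route from the paper's. The paper splits into two cases on the size of $\Delta$ and in each case builds a game consisting of passers and null players: for $\Delta\ge\tfrac23$ it uses two passers with quota $q=1-\Delta$, and for $\Delta<\tfrac23$ it uses $2a$ passers (with $a=\lfloor\tfrac{2}{3\Delta}\rfloor$) carrying alternating weights $\Delta$ and $\Delta/2$, plus several null players absorbing the residual mass, with quota $q=\Delta/2$. Your construction instead lives entirely inside the unanimity game $[n;1,\dots,1]$, whose weight polytope is the full simplex, and produces two strictly positive vectors that swap a $\Delta$-block with a $\delta$-block while a reservoir carries the remainder. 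The advantages of your approach are uniformity (no case split on $\Delta$) and that the resulting quota $q=1-\delta/2$ is automatically above $\tfrac12$, which the paper itself flags as desirable in applications and only obtains by a separate duality step; the paper's construction, on the other hand, makes the $\Vert\cdot\Vert_\infty$-bound slightly more visible by pairing each $\Delta$ directly with $\Delta/2$. One small slip worth fixing: the stated equivalence ``$\rho<\Delta$ is equivalent to $b\delta<(n-b)\Delta-1$'' has the inequality reversed (the correct form is $b\delta>1-(n-b)\Delta$), but since $(n-b)\Delta-1\ge 5\Delta>0$ the right-hand side is negative and the conclusion $\rho<\Delta$ holds automatically for every $\delta>0$, so nothing in the argument is affected.
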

\begin{proof} 
  If $\Delta\ge \frac{2}{3}$, we can consider a weighted game with two passers and $n-2$ null players. One representation is given by 
  $q=1-\Delta$ and $w=(\Delta,1-\Delta,0,\dots,0)$. Of course we can swap the weights of the first two players and obtain a 
  second representation given by quota $q$ an weight vector $\bar{w}=(1-\Delta,\Delta,0,\dots,0)$. With this, we compute 
  $\Vert w-\bar{w}\Vert_{1}=2\cdot (2\Delta-1)\ge \frac{2}{3}$ and
  $\Vert w-\bar{w}\Vert_{\infty}=2\Delta-1\ge \Delta/2$. 

  If $0<\Delta<\frac{2}{3}$, we define an integer $a:=\left\lfloor\frac{2}{3\Delta}\right\rfloor\ge 1$ and consider a weighted game 
  with $2a$ passers and $n-2a$ null players. One representation is given by $q=\Delta/2$, $w_{2i-1}=\Delta$, $w_{2i}=\Delta/2$ for $1\le i\le a$, 
  $w_{2a+1}=w_{2a+3}=w_{2a+5}=\frac{1}{3}-\frac{a\Delta}{2}\ge 0$, $w_{2a+2}=w_{2a+4}=w_{2a+6}=0$, and $w_i=0$ for all $2a+7\le i\le n$.
  By assumption we have $n\ge \tfrac{4}{3\Delta}+6\ge 2a+6$ and the first $2a$ players are obviously passers. By checking $0\le \frac{1}{3}-\frac{a\Delta}{2}<\frac{\Delta}{2}$ 
  we conclude that the remaining players are null players and have a non-negative weight. By construction, the weights of the $n$ players 
  sum up to one. Changing the weights of player $2i-1$ and player $2i$ for $1\le i\le a$ does not change the game so that we obtain 
  a second representation with quota $q$ and weights $\bar{w}_{2i}=\Delta$, $\bar{w}_{2i-1}=\Delta/2$ for $1\le i\le a$, 
  $\bar{w}_{2a+2}=\bar{w}_{2a+4}=\bar{w}_{2a+6}=\frac{1}{3}-\frac{a\Delta}{2}\ge 0$, 
  $w_{2a+1}=w_{2a+3}=w_{2a+4}=\bar{w}_{2a+1}=\bar{w}_{2a+2}=\bar{w}_{2a+3}=0$, and $\bar{w}_i=0$ for all $2a+7\le i\le n$. 
  With this, we have $\Vert w-\bar{w}\Vert_{1}=a\Delta +2-3a\Delta=2(1-a\Delta)\ge\frac{2}{3}$ and $\Vert w-\bar{w}\Vert_{\infty}=\Delta/2$.
\end{proof}   
For each  $w,\bar{w}\in\mathbb{R}_{\ge 0}^n$ with $\Delta(w)=\Delta(\bar{w})$, we obviously have $\Vert w-\bar{w}\Vert_\infty\le \Delta(w)$. 
So, a constant lower bound for the $\Vert\cdot\Vert_\infty$-distance can only exist if we slightly weaken the assumptions as done in Lemma~\ref{lemma_lb_approximation_delta}. 

In some applications only weighted games with a quota of at least one half are considered, which clashes with some of our constructions 
in the proofs of the previous lemmas. However, by considering the dual of a given weighted game we can turn a quota below one half to 
a quota above one half, see Lemma~\ref{lemma_dual_weights}. So, instead of small quotas we get large quotas.

So, either knowing the relative quota or the maximum relative weight is not sufficient in order to deduce a non-constant 
upper bound on the diameter of the weight polytope for a suitably large number of players. However, as we will see in the next 
section, knowing the relative quota and the maximum relative weight is indeed sufficient for such an upper bound, see 
Theorem~\ref{thm_weighted_representation}.  Our next aim is to show that this upper bound is tight up to a constant.

\begin{Lemma}
  \label{lemma_lb_diam_representation_polytop}
  For each $0<q<1$, $0<\Delta\le 1$, and each integer $n\ge \tfrac{1}{\Delta}+2$ there exist weight vectors 
  $w,\bar{w}\in\mathbb{R}^n_{\ge 0}$ with $\Vert w\Vert_1=\Vert\bar{w}\Vert_1=1$, $\Delta(w)=\Delta$  and a quota $0<\bar{q}\le 1$ 
  with $[q;w]=[\bar{q};\bar{w}]$ such that 
  $$
    \Vert w-\bar{w} \Vert_1\ge \frac{1}{200}\cdot \min\left\{2, \frac{4\Delta}{\min\{q,1-q\}}\right\}.
  $$ 
  Under the same assumptions there exist weight vectors $w,\bar{w}\in\mathbb{R}^n_{\ge 0}$ with $\Vert w\Vert_1=\Vert\bar{w}\Vert_1=1$, 
  $\Delta(w)=\Delta$  and a quota $0<\bar{q}\le 1$ with $[q;w]=[\bar{q};\bar{w}]$ such that 
  $\Vert w-\bar{w} \Vert_\infty\ge \frac{\Delta}{5}$. 
\end{Lemma}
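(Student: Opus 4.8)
The plan is to produce a single game $v$ carrying the prescribed representation $(q;w)$ with $\Delta(w)=\Delta$, to recognise $v$ as one of the games $v_{k,s,t}$ of Section~\ref{sec_lower_bounds}, and to obtain the far‑away representation $(\bar q;\bar w)$ from the diameter lower bounds proved there together with the triangle inequality.

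\emph{The construction.} Put $s:=\lfloor 1/\Delta\rfloor\ge 1$, $r:=1-s\Delta\in[0,\Delta)$, and
$$
  w:=\bigl(\underbrace{\Delta,\dots,\Delta}_{s},\,r,\,\underbrace{0,\dots,0}_{n-s-1}\bigr)
$$
(delete the entry $r$ when $r=0$); this fits because $n\ge\tfrac1\Delta+2\ge s+2$, and $\Vert w\Vert_1=1$, $\Delta(w)=\Delta$ in all cases. Let $v:=[q;w]$, so that $(q;w)$ is a normalized representation of $v$ by construction. Counting how many weight‑$\Delta$ players a coalition needs to reach the quota, with or without the help of the weight‑$r$ player, shows that as a simple game $v$ is isomorphic to $v_{k,s',t'}$, where $k:=\lceil q/\Delta\rceil$, $t'=n-s'$, and $s'=s+1$ precisely when $0<q-\lfloor q/\Delta\rfloor\Delta\le r$ (the weight‑$r$ player then substitutes for one missing weight‑$\Delta$ player), while $s'=s$ otherwise (the weight‑$r$ player is then a null player of $v$, or absent). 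A short computation with $0<q<1$ and $0\le r<\Delta$ gives $1\le k\le s'$.

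\emph{Extraction of $\bar w$.} Since $\operatorname{diam}$ and $\operatorname{diam}^\infty$ are isomorphism invariants, Lemma~\ref{lemma_diameter} (if $k<s'$) or Lemma~\ref{lemma_diameter2} (if $k=s'$) gives $\operatorname{diam}(\mathsf{W}(v))\ge c$ with $c=\max\{\tfrac1{10k},\tfrac1{10(s'-k)}\}$, respectively $c=\tfrac23$. As $\mathsf{W}(v)$ is compact, the triangle inequality applied to the point $w\in\mathsf{W}(v)$ produces a $w^\ast\in\mathsf{W}(v)$ with $\Vert w-w^\ast\Vert_1\ge c/2$; because $\mathsf{W}(v)$ is full‑dimensional we may perturb $w^\ast$ a tiny amount into its interior, obtaining a feasible normalized weight vector $\bar w$ — which therefore completes to a representation $(\bar q;\bar w)$ with $\bar q\in(0,1]$, $[q;w]=[\bar q;\bar w]$ — still satisfying $\Vert w-\bar w\Vert_1\ge c/2-\varepsilon$. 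It remains to check $c/2-\varepsilon\ge\tfrac1{200}\min\{2,4\Delta/\min\{q,1-q\}\}$. This is an elementary case analysis: for $q\le\tfrac12$ one has $k\le q/\Delta+1<3q/(2\Delta)$ as soon as $q>2\Delta$, so $\tfrac1{10k}$ already beats $\tfrac{\Delta}{50q}$ up to the absolute constant; symmetrically for $q>\tfrac12$, using $s'-k\le(1-q)/\Delta+1$ and the term $\tfrac1{10(s'-k)}$; and when $\min\{q,1-q\}\le 2\Delta$ one finds $k\le3$, resp.\ $s'-k\le3$ or $k=s'$, so that $c\ge\tfrac1{30}$ and $c/2-\varepsilon\ge\tfrac1{100}=\tfrac1{200}\cdot 2$, which handles the $\min$.

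\emph{The maximum norm, and the one delicate point.} Running the same argument with $\operatorname{diam}^\infty(\mathsf{W}(v_{k,s',t'}))\ge\tfrac1{s'}\ge\tfrac\Delta2$ from Lemma~\ref{lemma_diameter} yields $\Vert w-\bar w\Vert_\infty\ge\tfrac\Delta4-\varepsilon\ge\tfrac\Delta5$ whenever $k<s'$. The delicate case is $k=s'$: then $v$ is a unanimity (for $s'\ge2$), respectively dictator (for $s'=1$), game on $m:=s'$ of the $n$ players, and the bound $\operatorname{diam}^\infty\ge\tfrac13$ of Lemma~\ref{lemma_diameter2} is, after halving, just short of $\tfrac\Delta5$ when $\Delta$ is close to $1$. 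I will instead argue directly: for the unanimity game on $m\ge2$ players (with at least one null player, available since $n>s'$) the vectors $(1-(m-1)\varepsilon,\varepsilon,\dots,\varepsilon,0,\dots,0)$ and $(\tfrac1m,\dots,\tfrac1m,0,\dots,0)$ are feasible normalized weight vectors, so $\operatorname{diam}^\infty(\mathsf{W}(v))\ge 1-\tfrac1m\ge\tfrac12$; for the dictator game $(1,0,\dots,0)$ and $(\tfrac12+\varepsilon,\tfrac12-\varepsilon,0,\dots,0)$ are feasible and give the bound $\tfrac12$ again. The triangle inequality then yields $\Vert w-\bar w\Vert_\infty\ge\tfrac14-\varepsilon\ge\tfrac\Delta5$, completing both statements. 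The remaining work — the detailed verification of the isomorphism $v\cong v_{k,s',t'}$, the bookkeeping of the constants above, and the check that each $\bar w$ completes to a quota in $(0,1]$ — is routine and is naturally relegated to an appendix.
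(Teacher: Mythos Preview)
Your proof follows essentially the same strategy as the paper's: build $w$ with $\Delta(w)=\Delta$ so that $[q;w]=v_{k,s',t'}$, invoke the diameter bounds of Lemmas~\ref{lemma_diameter} and~\ref{lemma_diameter2}, halve via the triangle inequality, and perturb into the interior of $\mathsf{W}(v)$; your case split for the $\Vert\cdot\Vert_1$-bound is organised slightly differently (you split directly on $q\lessgtr\tfrac12$, the paper on $k\lessgtr s-k$) but the arithmetic is the same. Your direct treatment of the $k=s'$ case for the $\Vert\cdot\Vert_\infty$-bound---pushing $\operatorname{diam}^\infty$ up to $\tfrac12$ for unanimity and dictator games rather than using the $\tfrac13$ of Lemma~\ref{lemma_diameter2}---is actually a small improvement: the paper's sentence ``since $\Delta\le1$ we have $\Vert w-w'\Vert_\infty\ge\Delta/4$ in all cases'' is not fully justified when $k=s$ and $\Delta>\tfrac56$ (halving $\tfrac13$ gives only $\tfrac16<\tfrac\Delta5$ there), and your argument cleanly closes this gap.
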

\begin{proof}
  We set $a=\left\lfloor\tfrac{1}{\Delta}\right\rfloor\ge 1$ and choose the unique integer $b$ with $b\Delta<q$ and $(b+1)\Delta\ge q$. 
  With this we set $k=b+1\ge 1$ and $w=(\Delta,\dots,\Delta,1-a\Delta,0,\dots,0)$, where $0\le 1-a\Delta<\Delta$, so that $w\in\mathbb{R}_{\ge 0}^n$ 
  and $\Vert w\Vert_1=1$. If $b\Delta+(1-a\Delta)<q$ we set $s=a$ and $s=a+1$ otherwise, so that $[q;w]=v_{k,s,n-s}$. Note that $n-s\ge 1$.   

  If $k=s$, then Lemma~\ref{lemma_diameter2} gives $\operatorname{diam}(\mathsf{W}(v_{s,s,t}))\ge \frac{2}{3}$, so that the triangle inequality implies 
  the existence of a vector $w'\in \mathsf{W}(v_{s,s,t})$  with $\Vert w-w'\Vert_1\ge \frac{1}{3}$. If $k<s$, then Lemma~\ref{lemma_diameter} gives 
  $\operatorname{diam}(\mathsf{W}(v_{k,s,t}))\ge \max\left\{\frac{1}{10k},\frac{1}{10(s-k)}\right\}$, so that the triangle inequality implies 
  the existence of a vector $w'\in \mathsf{W}(v_{k,s,t})$  with 
  $$
    \Vert w-w'\Vert_1\ge \max\left\{\frac{1}{20k},\frac{1}{20(s-k)}\right\}=\frac{1}{20s}\cdot\frac{1}{\min\left\{\frac{k}{s},\frac{s-k}{s}\right\}}.
  $$  
  In the following we make several case distinctions for the subcase $k<s$.
  
  If $k=1$ or $s-k=1$, then $\Vert w-w'\Vert_1\ge\frac{1}{20}$. In the following we assume $k\ge 2$ and $s-k\ge 2$. 
  By construction we have $\tfrac{k}{2}\le (k-1)\Delta<q$, $k\Delta\ge q$, and $(s-1)\Delta\le 1$, so that $k<\frac{2q}{\Delta}$, 
  $\tfrac{s-k}{2}\Delta\le (s-1)\Delta-k\Delta\le 1-q$ and $s-k\le \frac{2(1-q)}{\Delta}$. 
  
  If $k\le s-k$, i.e., $2k\le s$, then $q\le \tfrac{1}{2}$ and
  $$
    \Vert w-w'\Vert_1\ge \frac{1}{20s}\cdot\frac{1}{\min\left\{\frac{k}{s},\frac{s-k}{s}\right\}}
    =\frac{1}{20k}\ge \frac{1}{40}\cdot \frac{\Delta}{q}=\frac{1}{40}\cdot \frac{\Delta}{\min\{q,1-q\}}.
  $$ 
  If $k> s-k$, i.e., $2k>s$, then $q> \tfrac{1}{2}$ and
  $$
    \Vert w-w'\Vert_1\ge \frac{1}{20s}\cdot\frac{1}{\min\left\{\frac{k}{s},\frac{s-k}{s}\right\}}
    =\frac{1}{20(s-k)}\ge \frac{1}{40}\cdot \frac{\Delta}{1-q}=\frac{1}{40}\cdot \frac{\Delta}{\min\{q,1-q\}}. 
  $$ 
  Thus,
  $$
    \Vert w-w' \Vert_1\ge \frac{1}{160}\cdot \min\left\{2, \frac{4\Delta}{\min\{q,1-q\}}\right\}
  $$
  in all cases. If $w'$ is on the boundary of $\mathsf{W}(v_{k,s,n-s})$, then we slightly perturb $w'$ to $\bar{w}$ in the interior of 
  $\mathsf{W}(v_{k,s,n-s})$ and choose a quota $\bar{q}\in(0,1]$ such that $[\bar{q};\bar{w}]=v_{k,s,n-s}$. This gives the statement for the $\Vert\cdot\Vert_1$-distance, 
  if the pertubation is small enough to be covered by our decrease of the factor $\tfrac{1}{160}$ to $\tfrac{1}{200}$.
  
  For the $\Vert\cdot\Vert_\infty$-distance we choose $w$ with $[q;w]=v_{k,s,n-s}$ as above. If $k=s$, then Lemma~\ref{lemma_diameter2} gives 
  $\operatorname{diam}^\infty(\mathsf{W}(v_{s,s,t}))\ge \frac{1}{3}$, so that the triangle inequality implies 
  the existence of a vector $w'\in \mathsf{W}(v_{s,s,t})$  with $\Vert w-w'\Vert_\infty\ge \frac{1}{6}$. If $k<s$, then Lemma~\ref{lemma_diameter} gives 
  $\operatorname{diam}^\infty(\mathsf{W}(v_{k,s,t}))\ge \frac{1}{s}$, so that the triangle inequality implies 
  the existence of a vector $w'\in \mathsf{W}(v_{k,s,t})$  with $\Vert w-w'\Vert_\infty\ge \tfrac{1}{2s}$.  For $s=1$ this gives $\Vert w-w'\Vert_\infty\ge \tfrac{1}{2}$. 
  For $s\ge 2$ we have $s\le\tfrac{2}{\Delta}$ so that $\Vert w-w'\Vert_\infty\ge \tfrac{\Delta}{4}$. Since $\Delta\le 1$ we have $\Vert w-w'\Vert_\infty\ge \tfrac{\Delta}{4}$ 
  in all cases, so that the stated result follows possibly by a perturbation.
\end{proof}

\section{Upper bounds for the diameter of the weight polytope}
\label{sec_upper_bounds}

Before we start to upper bound $\operatorname{diam}(\mathsf{W}(v))$ in terms of $\Delta$ and $q$, we provide a slightly more general 
result.

\begin{Lemma}
  \label{lemma_qdelta_bound_winning}
  Let $w\in\mathbb{R}^n_{\ge 0}$ with $\Vert w\Vert_1=1$ for an integer $n\in\mathbb{N}_{>0}$ and $0<q<1$. For each $x\in\mathbb{R}^n_{\ge 0}$ with 
  $\Vert x\Vert_1=1$ and $x(S)=\sum_{s\in S}x_s\ge q$ for every winning coalition $S$ of $[q;w]$, we have
  $$
    \Vert w-x\Vert_1 \le \frac{2\Delta}{\min\{q+\Delta,1-q\}}\le\frac{2\Delta}{\min\{q,1-q\}},
  $$ 
  where $\Delta=\Delta(w)$.
\end{Lemma}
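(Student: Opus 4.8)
The plan is to reduce the claim to a one-sided estimate and then prove that by an averaging argument over cyclically shifted coalitions. Put $P:=\{i : w_i>x_i\}$, $M:=N\setminus P$ and $\delta:=\sum_{i\in P}(w_i-x_i)$. Because $\Vert w\Vert_1=\Vert x\Vert_1=1$ and $w,x\ge 0$, the same bookkeeping as in the proof of Lemma~\ref{lemma_improved_relation_infty_1} gives $\Vert w-x\Vert_1=2\delta$ together with $\delta=w(P)-x(P)=x(M)-w(M)$. The second inequality of the statement is immediate, since $q+\Delta\ge q$ implies $\min\{q+\Delta,1-q\}\ge\min\{q,1-q\}$. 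For the first inequality it suffices, because $\min\{q+\Delta,1-q\}$ does not exceed either $q+\Delta$ or $1-q$, to show that $\delta\le\frac{\Delta}{q+\Delta}$ when $w(P)\ge q$ and $\delta\le\frac{\Delta}{1-q}$ when $w(P)<q$; these two cases are exhaustive.

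Assume first $w(P)\ge q$. I would place the players of $P$ around a circle of circumference $w(P)$, giving player $p$ an arc of length $w_p$, and for each point $\theta$ of the circle let $C(\theta)$ be the set of players whose arc meets the arc of length $q$ that starts at $\theta$. Those arcs cover an arc of length $q$, so $w(C(\theta))\ge q$, whence $C(\theta)$ is a winning coalition of $[q;w]$ and $x(C(\theta))\ge q$ by hypothesis. Integrating the inequality $x(C(\theta))\ge q$ over the circle, and using that the set of $\theta$ for which a fixed arc of length $w_p$ meets the sliding length-$q$ arc has measure at most $q+w_p$, one obtains
\[
  q\cdot w(P)\;\le\;\sum_{p\in P}x_p\,(q+w_p)\;=\;q\,x(P)+\sum_{p\in P}x_pw_p .
\]
Since $w_p\le\Delta$ for every $p$, the last sum is at most $\Delta\,x(P)=\Delta\bigl(w(P)-\delta\bigr)$, while the left-hand side minus $q\,x(P)$ equals $q\,\delta$; hence $q\,\delta\le\Delta\bigl(w(P)-\delta\bigr)$, i.e. $\delta\le\frac{\Delta\,w(P)}{q+\Delta}\le\frac{\Delta}{q+\Delta}$.

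Now assume $w(P)<q$. I would first rephrase the hypothesis: if $w(T)\le 1-q$ then $w(N\setminus T)\ge q$, so $N\setminus T$ is winning, $x(N\setminus T)\ge q$, and therefore $x(T)\le 1-q$. Here $w(M)=1-w(P)>1-q$, so I can place the players of $M$ around a circle of circumference $w(M)$, again with arcs of length $w_p$, and for each $\theta$ let $T(\theta)$ be the set of players whose arc is contained in the arc of length $1-q$ that starts at $\theta$. Then $w(T(\theta))\le 1-q$, hence $x(T(\theta))\le 1-q$. Integrating this over the circle, and using that the set of $\theta$ for which a fixed arc of length $w_p$ lies inside the sliding length-$(1-q)$ arc has measure at least $(1-q)-w_p$, one obtains
\[
  (1-q)\cdot w(M)\;\ge\;\sum_{p\in M}x_p\bigl((1-q)-w_p\bigr)\;=\;(1-q)\,x(M)-\sum_{p\in M}x_pw_p ,
\]
so $(1-q)\,\delta=(1-q)\bigl(x(M)-w(M)\bigr)\le\sum_{p\in M}x_pw_p\le\Delta\,x(M)\le\Delta$, that is $\delta\le\frac{\Delta}{1-q}$. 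Combining the two cases gives $\Vert w-x\Vert_1=2\delta\le\frac{2\Delta}{\min\{q+\Delta,1-q\}}$.

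The step I expect to need the most care is the two elementary measure computations on the circle — the measure of the set of shifts for which a given player's arc meets, respectively is contained in, the sliding window — together with the verification that the window length never exceeds the circumference; this is precisely where $q\le w(P)$, respectively $1-q<w(M)$, is used, and it keeps the sliding-window picture non-degenerate. The boundary possibilities are harmless: $w(P)=0$ cannot occur in the first case since $w(P)\ge q>0$, and $w(P)=q$ forces $\delta=0$. Everything else is the routine rearrangement shown above, using only $0\le w_p\le\Delta$ and $x(P),x(M)\le 1$.
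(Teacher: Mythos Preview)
Your argument is correct and takes a genuinely different route from the paper. The paper also splits into the cases $w(S^-)\ge q$ and $w(S^-)<q$ (your $P$ plays the role of its $S^-$), but rather than averaging it constructs a \emph{single} test coalition $T$ greedily: starting from $T=\emptyset$ it adds players in order of increasing ratio $x_i/w_i$ until $w(T)\ge q$, which forces $q\le w(T)<q+\Delta$, and then invokes $x(T)\ge q$ once and manipulates the resulting ratio inequalities. Your sliding-window averaging replaces this greedy step by integrating the hypothesis over a one-parameter family of coalitions, with the two elementary measure bounds $\min\{w(P),q+w_p\}\le q+w_p$ and $\max\{0,(1-q)-w_p\}\ge (1-q)-w_p$ standing in for the greedy estimate $w(T)<q+\Delta$. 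Both approaches yield exactly the same pair of bounds $\delta\le\Delta/(q+\Delta)$ and $\delta\le\Delta/(1-q)$. The paper's method is a touch more self-contained (no circle picture to set up), while yours is more symmetric between the two cases, makes the dual reformulation $w(T)\le 1-q\Rightarrow x(T)\le 1-q$ explicit, and sidesteps the mild awkwardness in the greedy ordering when some $w_i=0$.
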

\begin{proof}
  Consider a winning coalition $T$ such that $x(T)$ is minimal and invoke $x(T)\ge q$, see the appendix 
  for the technical details.
\end{proof}

From Lemma~\ref{lemma_qdelta_bound_winning} we can directly conclude:  
\begin{Corollary}
  Let $w,\bar{w}\in\mathbb{R}^n_{\ge 0}$ with $\Vert w\Vert_1=\Vert \bar{w}\Vert_1=1$ for an integer $n\in\mathbb{N}_{>0}$ and 
  $0<q, \bar{q}<1$. If $[q;w]=[\bar{q};\bar{w}]$, then we have 
  $$
    \Vert w-\bar{w}\Vert_1 \le \max\left\{ \frac{2\Delta(w)}{\min\{q,1-q\}},
    \frac{2\Delta(\bar{w})}{\min\{\bar{q},1-\bar{q}\}}\right\}
    \le\frac{2\Delta(w)}{\min\{q,1-q\}}+\frac{2\Delta(\bar{w})}{\min\{\bar{q},1-\bar{q}\}}.
  $$
\end{Corollary}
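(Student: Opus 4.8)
The plan is to reduce the corollary to a single application of Lemma~\ref{lemma_qdelta_bound_winning}, exploiting the symmetry between the two representations. Since $[q;w]=[\bar{q};\bar{w}]$ is one and the same game, a coalition is winning for $[q;w]$ if and only if it is winning for $[\bar{q};\bar{w}]$; in particular every winning coalition $S$ satisfies both $w(S)\ge q$ and $\bar{w}(S)\ge\bar{q}$. The only gap between this observation and the hypothesis of Lemma~\ref{lemma_qdelta_bound_winning} is that the lemma applied to the pair $(q;w)$ requires $x(S)\ge q$ for every winning coalition $S$, whereas $\bar{w}$ only directly provides $\bar{w}(S)\ge\bar{q}$.

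First I would dispose of this gap by a case distinction on which of $q,\bar{q}$ is larger. Assume $q\le\bar{q}$. Then for every winning coalition $S$ of $[q;w]$ we have $\bar{w}(S)\ge\bar{q}\ge q$, so $x:=\bar{w}$ meets the hypotheses of Lemma~\ref{lemma_qdelta_bound_winning} for the representation $(q;w)$ (note $\Vert\bar{w}\Vert_1=1$ and $0<q<1$). The lemma then yields
$$
  \Vert w-\bar{w}\Vert_1\le\frac{2\Delta(w)}{\min\{q,1-q\}}.
$$
If instead $q>\bar{q}$, the same argument with the roles of $(q;w)$ and $(\bar{q};\bar{w})$ interchanged --- now using $w(S)\ge q>\bar{q}$ for every winning coalition $S$ of $[\bar{q};\bar{w}]$ --- gives
$$
  \Vert w-\bar{w}\Vert_1\le\frac{2\Delta(\bar{w})}{\min\{\bar{q},1-\bar{q}\}}.
$$

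In either case $\Vert w-\bar{w}\Vert_1$ is bounded by one of the two displayed quantities, hence by their maximum; and since both quantities are nonnegative, the maximum is at most their sum, which gives the second inequality for free. I do not expect a real obstacle here: all the substance lies in Lemma~\ref{lemma_qdelta_bound_winning}, and the one point to watch is that the asymmetric hypothesis ``$x(S)\ge q$ on winning coalitions'' must be matched to the representation whose quota is the \emph{smaller} of the two, so that the \emph{larger} quota of the other representation automatically certifies it. The boundary case $q=\bar{q}$ (where either branch applies) needs no separate treatment, and since the statement already assumes $0<q,\bar{q}<1$ no perturbation argument is required.
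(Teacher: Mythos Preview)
Your proposal is correct and is exactly the argument the paper has in mind: the corollary is stated without proof, introduced by ``From Lemma~\ref{lemma_qdelta_bound_winning} we can directly conclude,'' and the case distinction on $q\lessgtr\bar{q}$ that you use is precisely the one the paper spells out later in the proof of Theorem~\ref{thm_weighted_representation}. There is nothing to add.
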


Unfortunately, this does not allow us to derive an upper bound of $\Vert w-\bar{w}\Vert_1$ which only depends on $q$ and $\Delta(w)$. 
However, we can obtain the following analog of Lemma~\ref{lemma_qdelta_bound_winning} for losing instead of winning coalitions.

\begin{Lemma}
  \label{lemma_qdelta_bound_losing}
  Let $w\in\mathbb{R}^n_{\ge 0}$ with $\Vert w\Vert_1=1$, $\Delta=\Delta(w)$, and $0<q<1$. For each $x\in\mathbb{R}^n_{\ge 0}$ with 
  $\Vert x\Vert_1=1$ and $x(S)=\sum_{s\in S}x_s\le q$ for every losing coalition $S$ of $[q;w]$, we have 
  $$
    \Vert w-x\Vert_1 \le\frac{4\Delta}{\min\{q,1-q\}}.
  $$
  Moreover, if $q>\Delta$, then  
  $
    \Vert w-x\Vert_1 \le \frac{2\Delta}{\min\{q-\Delta,1-q+\Delta\}}\le\frac{2\Delta}{\min\{q-\Delta,1-q\}}
  $.
\end{Lemma}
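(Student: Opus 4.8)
The statement is the losing-coalition counterpart of Lemma~\ref{lemma_qdelta_bound_winning}, and the plan is to run the same argument with losing coalitions of $v:=[q;w]$ in place of winning ones -- equivalently, to dualize. Recall that $S$ is losing in $v$ if and only if $N\setminus S$ is winning in the dual game $v^d$, and that by Lemma~\ref{lemma_dual_weights} the pair $(1-q+\varepsilon;w)$ is a normalized representation of $v^d$ for all small $\varepsilon>0$, so that $\Delta(w)=\Delta$ is unchanged. Complementing the hypothesis, ``$x(S)\le q$ for every losing $S$ of $v$'' is the same as ``$x(S')=1-x(N\setminus S')\ge 1-q$ for every winning $S'$ of $v^d$''. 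Working directly on the $v$-side lets me avoid the mismatch between the dual quota $1-q+\varepsilon$ and the lower bound $1-q$ we actually control.

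The heart of the proof is choosing the right coalition. Let $S^{*}$ be a losing coalition of $v$ for which $x(S^{*})$ is maximal; as in the proof of Lemma~\ref{lemma_qdelta_bound_winning} we may assume $S^{*}$ is maximal losing, and then $T^{*}:=N\setminus S^{*}$ is a minimal winning coalition of $v^d$ of minimal $x$-weight -- the exact mirror of the ``winning coalition with $x(T)$ minimal'' used there. Since $S^{*}$ is maximal losing, $w(S^{*})+w_j\ge q$ for every $j\notin S^{*}$, and $w_j\le\Delta$ gives $q-\Delta\le w(S^{*})<q$; moreover $x(S^{*})\le q$, so $x(S^{*})-w(S^{*})\le\Delta$, i.e.\ $w(T^{*})-x(T^{*})\le\Delta$ on the complement. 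With $q-\Delta\le w(S^{*})<q$ and $x(S^{*})\le q$ in hand, the $\ell_1$-bookkeeping of the proof of Lemma~\ref{lemma_qdelta_bound_winning} -- splitting the coordinates according to the sign of $w_i-x_i$, using $\Vert w\Vert_1=\Vert x\Vert_1=1$, and exploiting the maximality of $x(S^{*})$ exactly where that proof exploits the minimality of the $x$-weight of its coalition -- yields, when $q>\Delta$, the bound $\Vert w-x\Vert_1\le\frac{2\Delta}{\min\{q-\Delta,\,1-q+\Delta\}}$. The shift $q+\Delta\mapsto q-\Delta$ (and $1-q\mapsto 1-q+\Delta$) relative to Lemma~\ref{lemma_qdelta_bound_winning} simply reflects that a maximal losing coalition satisfies $w(S^{*})\in[q-\Delta,q)$, hence $w(N\setminus S^{*})\in(1-q,1-q+\Delta]$, whereas a minimal winning coalition there satisfies $w(T)\in[q,q+\Delta)$. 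The closing inequality $\frac{2\Delta}{\min\{q-\Delta,1-q+\Delta\}}\le\frac{2\Delta}{\min\{q-\Delta,1-q\}}$ is immediate from $1-q\le 1-q+\Delta$.

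Finally I would deduce the unconditional bound from the conditional one. If $\min\{q,1-q\}<2\Delta$, then $\frac{4\Delta}{\min\{q,1-q\}}>2\ge\Vert w-x\Vert_1$ and there is nothing to prove. Otherwise $\min\{q,1-q\}\ge 2\Delta$, so in particular $q>\Delta$, and a short case distinction on whether $q\le\tfrac12$ shows $\min\{q,1-q\}\le 2\min\{q-\Delta,1-q+\Delta\}$, i.e.\ $\frac{2\Delta}{\min\{q-\Delta,1-q+\Delta\}}\le\frac{4\Delta}{\min\{q,1-q\}}$, so the conditional bound already gives the claim. The main obstacle is the $\ell_1$-bookkeeping step in the middle paragraph: an \emph{arbitrary} maximal losing coalition does not control the excess mass sitting outside $S^{*}$, so one genuinely needs the maximality of $x(S^{*})$ (applied to a well-chosen competing losing coalition, e.g.\ a greedy one built from the coordinates where $w$ exceeds $x$), and the constants must be tracked carefully to land on the factor $2\Delta$ rather than something larger.
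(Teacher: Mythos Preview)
Your overall plan matches the paper's: mirror the proof of Lemma~\ref{lemma_qdelta_bound_winning} with losing coalitions in place of winning ones, reuse the $S^{+}/S^{-}$ split and $\Vert w-x\Vert_1=2\delta\,w(S^{-})$, build a losing coalition $T$ with $q-\Delta\le w(T)<q$, and then case-split. Your deduction of the unconditional bound from the conditional one (dispose of $\min\{q,1-q\}<2\Delta$ trivially, then check $\min\{q,1-q\}\le 2\min\{q-\Delta,1-q+\Delta\}$) is correct and in fact slightly tidier than the paper's version.

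The one concrete issue is in your description of the greedy step, which is backwards. In Lemma~\ref{lemma_qdelta_bound_winning} the coalition is built by adding players with \emph{smallest} $x_i/w_i$; the mirror here is to add players with \emph{largest} $x_i/w_i$ (players with $w_j=0$ first), stopping just before $w(T)\ge q$. Thus $T$ preferentially contains $S^{+}$ (where $x$ exceeds $w$), not ``the coordinates where $w$ exceeds $x$'', and the case split is on $w(S^{+})\ge q$ versus $w(S^{+})<q$, yielding $\Vert w-x\Vert_1\le\frac{2\Delta}{q-\Delta}$ and $\Vert w-x\Vert_1\le\frac{2\Delta}{1-q+\Delta}$ respectively. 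The ``$S^{*}$ with maximal $x(S^{*})$'' framing you start with does not by itself deliver the needed ratio inequality $x(T)/w(T)\ge x(S^{+})/w(S^{+})$ in the first case, nor $S^{+}\subseteq T$ in the second; it is the greedy ratio ordering that does the work, just as in the winning-coalition proof.
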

\begin{proof}
  Consider a losing coalition $T$ such that $x(T)$ is maximal and invoke $x(T)\le q$. Technical details are provided in the appendix.
\end{proof}

\begin{Theorem}
  \label{thm_weighted_representation}
  Let $w,\bar{w}\in\mathbb{R}^n_{\ge 0}$ with $\Vert w\Vert_1=\Vert \bar{w}\Vert_1=1$, $\Delta=\Delta(w)$, and 
  $0<q, \bar{q}<1$. If $[q;w]=[\bar{q};\bar{w}]$, then we have 
  $$
    \Vert w-\bar{w}\Vert_1\le \min\left\{2,\frac{4\Delta}{\min\{q,1-q\}}\right\}\le\frac{4\Delta}{\min\{q,1-q\}},
  $$
  i.e., $\operatorname{diam}(\mathsf{W}([q;w]))\le \frac{4\Delta(w)}{\min\{q,1-q\}}$. 
  Moreover, if $q>\Delta$, then we have 
  $$
     \Vert w-\bar{w}\Vert_1\le \frac{2\Delta}{\min\{q-\Delta,1-q\}}.
  $$
\end{Theorem}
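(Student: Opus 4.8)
The plan is to obtain the theorem as a short corollary of Lemma~\ref{lemma_qdelta_bound_winning} and Lemma~\ref{lemma_qdelta_bound_losing}, using a single case distinction on whether $\bar q\ge q$ or $\bar q\le q$. The key observation is that $[q;w]$ and $[\bar q;\bar w]$ are literally the \emph{same} weighted game $v$, so ``winning coalition of $[q;w]$'' and ``winning coalition of $[\bar q;\bar w]$'' mean the same thing (and likewise for losing coalitions). Consequently $\bar w$ already satisfies, for free, exactly the kind of coalition inequality that Lemma~\ref{lemma_qdelta_bound_winning} or Lemma~\ref{lemma_qdelta_bound_losing} asks of its auxiliary vector $x$ — one or the other, depending on how $\bar q$ sits relative to $q$.

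First I would dispose of the $\min\{2,\cdot\}$: since $w,\bar w\in\mathbb{R}_{\ge 0}^n$ are normalized, $\Vert w-\bar w\Vert_1\le\Vert w\Vert_1+\Vert\bar w\Vert_1=2$. It remains to show $\Vert w-\bar w\Vert_1\le \tfrac{4\Delta}{\min\{q,1-q\}}$ and, under the extra hypothesis $q>\Delta$, the sharper bound. If $\bar q\ge q$, then every winning coalition $S$ of $v$ has $\bar w(S)\ge\bar q\ge q$, so $x:=\bar w$ fits Lemma~\ref{lemma_qdelta_bound_winning} applied to the representation $(q;w)$; this gives $\Vert w-\bar w\Vert_1\le\tfrac{2\Delta}{\min\{q+\Delta,1-q\}}$, which is $\le\tfrac{2\Delta}{\min\{q,1-q\}}\le\tfrac{4\Delta}{\min\{q,1-q\}}$, and — using $q+\Delta\ge q-\Delta$ — also $\le\tfrac{2\Delta}{\min\{q-\Delta,1-q\}}$, the refined bound when $q>\Delta$. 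If instead $\bar q\le q$, then every losing coalition $S$ of $v$ has $\bar w(S)<\bar q\le q$, so $x:=\bar w$ fits Lemma~\ref{lemma_qdelta_bound_losing} applied to $(q;w)$; its first assertion gives $\Vert w-\bar w\Vert_1\le\tfrac{4\Delta}{\min\{q,1-q\}}$ outright, and its ``moreover'' clause (valid since $q>\Delta$) gives $\Vert w-\bar w\Vert_1\le\tfrac{2\Delta}{\min\{q-\Delta,1-q+\Delta\}}\le\tfrac{2\Delta}{\min\{q-\Delta,1-q\}}$.

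As $\bar q\ge q$ or $\bar q\le q$ necessarily holds, combining the two cases proves all the inequalities in the statement. The reformulation $\operatorname{diam}(\mathsf{W}([q;w]))\le\tfrac{4\Delta(w)}{\min\{q,1-q\}}$ then follows from Definition~\ref{def_weighted_polytope} and the remark after it: $w\in\mathsf{W}(v)$, and the diameter is attained up to any $\varepsilon>0$ by a pair of weight vectors both extending to normalized representations, to which the just-proved bound applies. I do not expect a genuine obstacle here, since all the real work is hidden inside Lemmas~\ref{lemma_qdelta_bound_winning} and~\ref{lemma_qdelta_bound_losing}; the only thing to be careful about is that a single one of these two lemmas will \emph{not} do — one must branch on $\bar q$ versus $q$, and it is the losing-coalition case that forces the constant $4$ (the winning-coalition case alone would have given the better constant $2$), which is why the uniform bound in the statement carries a $4$.
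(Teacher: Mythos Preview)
Your proposal is correct and follows essentially the same route as the paper's own proof: split on whether $\bar q\ge q$ or $\bar q<q$ and apply Lemma~\ref{lemma_qdelta_bound_winning} or Lemma~\ref{lemma_qdelta_bound_losing} respectively, with the trivial bound $\Vert w-\bar w\Vert_1\le 2$ handled separately. Your closing remark on why the constant $4$ (rather than $2$) appears is a nice addition; the paper's proof is terser and does not comment on this, nor does it spell out the passage to $\operatorname{diam}(\mathsf{W}(v))$ that you sketch.
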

\begin{proof}
   In Section~\ref{sec_weight_polytope} we have observed $\Vert w-\bar{w}\Vert_1\le 2$. 
   If $\bar{q}\ge q$, then $\bar{w}(S)\ge \bar{q}\ge q$ for every winning coalition $S$ of $[q;w]$. Here, we can apply 
   Lemma~\ref{lemma_qdelta_bound_winning}. Otherwise we have $\bar{w}(T)<\bar{q}<q$ for every losing coalition $T$ of $[q;w]$ 
   and Lemma~\ref{lemma_qdelta_bound_losing} applies.
\end{proof}

As an example we consider the normalized weight vector $w=\tfrac{1}{120}\cdot(15,14,\dots,1)$ and the quota $\tfrac{3}{5}$. Let $(\bar{q};\bar{w})$ 
be another normalized representation of the weighted game $[q;w]$, then the first bound gives $\Vert w-\bar{w}\Vert_1\le \tfrac{5}{4}$. Since 
$\Delta=\tfrac{1}{8}>q$, also the second bound applies yielding $\Vert w-\bar{w}\Vert_1\le \tfrac{5}{8}$. We remark that for this specific example 
the diameter $\operatorname{diam}(\mathsf{W}([q;w]))$ is much smaller than $\tfrac{5}{8}$.

\section{Applications}
\label{sec_applications}

A power index $\varphi$ is a mapping from the set of weighted games on $n$ players into $\mathbb{R}_{\ge 0}^n$. We call $\varphi$ 
efficient if $\Vert\varphi(v)\Vert_1=1$ for all weighted games $v$. The 
difference $\Vert w-\varphi([q;w])\Vert_1$ between relative weights and the corresponding power distribution is 
studied in the literature, see e.g.\ \cite{dubey1979mathematical,kurz2014nucleolus,neyman1982renewal}. Lemma~\ref{lemma_qdelta_bound_winning} 
is a generalization of \cite[Lemma 1]{kurz2014nucleolus}: if $\varphi$ is the nucleolus, see e.g.\ \cite{schmeidler1969nucleolus}, 
and $0<q<1$ then 
\begin{equation}
  \label{ie_ub_nucleolus}
  \Vert w-\varphi([q;w])\Vert_1 \le\frac{2\Delta(w)}{\min\{q,1-q\}} 
\end{equation}
for all $w\in\mathbb{R}_{\ge 0}^n$ with $\Vert w\Vert_1=1$. From Theorem~\ref{thm_weighted_representation} we directly conclude:
\begin{Corollary}
  \label{cor_ub_general}
  Let $w\in\mathbb{R}^n_{\ge 0}$ with $\Vert w\Vert_1=1$ and $0<q<1$. 
  If an efficient power index $\varphi$ permits the existence of a quota $q'\in(0,1)$ such that $[q';\varphi([q;w])]=[q;w]$, i.e., 
  the power vector of the given weighted game can be completed to a representation of the same game, then
  $$
    \Vert w-\varphi([q;w])\Vert_1
    \le \frac{4\Delta(w)}{\min\{q,1-q\}}.
  $$
\end{Corollary}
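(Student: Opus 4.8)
The plan is to apply Theorem~\ref{thm_weighted_representation} directly, after observing that the hypothesis of the corollary places us in exactly the situation covered by that theorem. First I would set $\bar{w}:=\varphi([q;w])$. Because $\varphi$ is efficient, $\Vert\bar{w}\Vert_1=1$, and $\bar{w}\in\mathbb{R}_{\ge 0}^n$ by definition of a power index; by hypothesis there is a quota $\bar{q}:=q'\in(0,1)$ with $[\bar{q};\bar{w}]=[q';\varphi([q;w])]=[q;w]$. Thus $w,\bar{w}$ are two normalized weight vectors with $0<q,\bar{q}<1$ representing the same weighted game, which is precisely the hypothesis of Theorem~\ref{thm_weighted_representation}.

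Then I would simply invoke Theorem~\ref{thm_weighted_representation} with $\Delta=\Delta(w)$ to obtain
$$
  \Vert w-\varphi([q;w])\Vert_1=\Vert w-\bar{w}\Vert_1\le\min\left\{2,\frac{4\Delta(w)}{\min\{q,1-q\}}\right\}\le\frac{4\Delta(w)}{\min\{q,1-q\}},
$$
which is the claimed inequality. Nothing further is needed.

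Honestly, there is no real obstacle here: the corollary is a direct restatement of Theorem~\ref{thm_weighted_representation} once one recognises that the power vector, when completable to a representation of the same game, plays the role of the second normalized weight vector $\bar{w}$. The only point worth a sentence of care is that the theorem is stated with $0<q,\bar{q}<1$ (strict), so one should note that the hypothesis indeed supplies $q'\in(0,1)$ strictly, and that $q$ itself is assumed to lie in $(0,1)$; both are given. If desired, one could also record the sharper bound $\Vert w-\varphi([q;w])\Vert_1\le\frac{2\Delta(w)}{\min\{q-\Delta(w),1-q\}}$ in the case $q>\Delta(w)$, inherited verbatim from the second part of Theorem~\ref{thm_weighted_representation}, which for the nucleolus recovers \eqref{ie_ub_nucleolus} up to the constant.
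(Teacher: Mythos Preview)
Your proposal is correct and matches the paper's approach exactly: the paper does not give a separate proof but simply states that the corollary follows directly from Theorem~\ref{thm_weighted_representation}, which is precisely what you do by setting $\bar{w}=\varphi([q;w])$ and $\bar{q}=q'$.
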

\textit{Representation compatibility} of $\varphi$ for $[q;w]$ is automatically satisfied
for the modified nucleolus (modiclus) \cite{sudholter1996modified}, minimum sum representation index \cite{freixas2014minimum} or one of the power indices based 
on averaged representations \cite{kaniovski2015average} for all weighted games and for the Penrose-Banzhaf index for all spherically separable simple games \cite{houy2014geometry}.
The theorem also applies to the bargaining model for weighted games analyzed in \cite{market_value_model}, cf.~\cite{prop_payoffs}. It is unknown 
whether there exists a constant $c\in\mathbb{R}_{>0}$ such that 
\begin{equation}
  \label{ie_ub_ssi}
  \Vert w-\operatorname{SSI}([q;w])\Vert_1
    \le \frac{c\Delta(w)}{\min\{q,1-q\}}.
\end{equation}
holds for the Shapley-Shubik index $\operatorname{SSI}$ and all $w\in\mathbb{R}^n_{\ge 0}$ with $\Vert w\Vert_1=1$ and $0<q<1$. For the Penrose-Banzhaf index 
such a constant $c$ can not exist, see \cite[Proposition 2]{kurz2018note}.

For the other direction we have:
\begin{Lemma}
  \label{lemma_general_lower_distance_bound}
  Let $n\in\mathbb{N}_{>0}$, $q,\bar{q}\in(0,1]$, $w,\bar{w}\in\mathbb{R}^n_{\ge 0}$ with $\left\Vert w\right\Vert_1=
  \left\Vert \bar{w}\right\Vert_1=1$ and $[q;w]=[\bar{q};\bar{w}]$, $\Vert\cdot\Vert$ be an arbitrary norm on 
  $\mathbb{R}^n$ and $\varphi$ be a mapping from the set of weighted games (on $n$ players) into $\mathbb{R}_{\ge 0}^n$,  
  then we have 
  $$
    \max\left\{\left\Vert w-\varphi\left(\left[q;w\right]\right)\right\Vert,
    \left\Vert \bar{w}-\varphi\left(\left[\bar{q};\bar{w}\right]\right)\right\Vert
    \right\}\ge \frac{\left\Vert w-\bar{w}\right\Vert}{2}.
  $$
\end{Lemma}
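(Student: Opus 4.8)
The plan is to exploit the fact that the two representations $(q;w)$ and $(\bar q;\bar w)$ describe the \emph{same} weighted game, together with the fact that a power index (more generally, any mapping $\varphi$ as in the statement) is a function of the game, not of the chosen representation. So I would first set $v:=[q;w]=[\bar q;\bar w]$ and record that $\varphi([q;w])=\varphi(v)=\varphi([\bar q;\bar w])$; call this common vector $\pi$. The whole lemma then reduces to a one-line triangle inequality.

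Concretely, the second step is to write
$$
  \left\Vert w-\bar w\right\Vert
  \;\le\; \left\Vert w-\pi\right\Vert + \left\Vert \pi-\bar w\right\Vert
  \;=\; \left\Vert w-\varphi([q;w])\right\Vert + \left\Vert \bar w-\varphi([\bar q;\bar w])\right\Vert
  \;\le\; 2\max\left\{\left\Vert w-\varphi([q;w])\right\Vert,\left\Vert \bar w-\varphi([\bar q;\bar w])\right\Vert\right\},
$$
using the symmetry $\left\Vert \pi-\bar w\right\Vert=\left\Vert \bar w-\pi\right\Vert$ of a norm and then the elementary bound $a+b\le 2\max\{a,b\}$ for non-negative reals. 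Dividing by $2$ yields the claim. Note that neither efficiency of $\varphi$ nor any property of the $\Vert\cdot\Vert_1$-norm is needed, which is why the statement is phrased for an arbitrary norm and an arbitrary mapping $\varphi$.

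There is essentially no main obstacle here: the only thing to be careful about is the conceptual point that $[q;w]=[\bar q;\bar w]$ as \emph{games} (even though $(q;w)\neq(\bar q;\bar w)$ as \emph{representations}, as emphasised after the definition of a weighted representation), so that plugging either representation into $\varphi$ produces the identical output vector. Once that is spelled out, the triangle inequality finishes the proof immediately. The intended use, presumably, is the contrapositive: if the weight polytope has large diameter, then for at least one of the two far-apart normalized representations the power vector $\varphi$ must be far from the weights, complementing the upper bounds of Section~\ref{sec_upper_bounds}.
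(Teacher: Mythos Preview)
Your proof is correct and matches the paper's own argument essentially line for line: the paper also just applies the triangle inequality to obtain $\left\Vert w-\varphi([q;w])\right\Vert+\left\Vert \bar w-\varphi([\bar q;\bar w])\right\Vert\ge \left\Vert w-\bar w\right\Vert$ and then passes to the maximum. Your explicit remark that $\varphi([q;w])=\varphi([\bar q;\bar w])$ because $\varphi$ depends only on the game is the one point the paper leaves implicit, so your write-up is if anything slightly more complete.
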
 
\begin{proof}
    Using the triangle inequality yields $\left\Vert w-\varphi\left(\left[q;w\right]\right)\right\Vert+
    \left\Vert \bar{w}-\varphi\left(\left[\bar{q};\bar{w}\right]\right)\right\Vert\ge 
    \left\Vert w-\bar{w}\right\Vert$ from which we can conclude the stated 
    inequality.
\end{proof}

\begin{Proposition}
  \label{prop_impossible}
  Let $\varphi$ be a mapping from the set of weighted games (on $n$ players) into $\mathbb{R}_{\ge 0}^n$.
  \begin{itemize}
    \item[(i)] For each $q\in(0,1]$ and each integer $n\ge 2$ there exists a weighted game $[q;w]$, where $w\in\mathbb{R}^n_{\ge 0}$ and $\Vert w\Vert_1=1$, 
               such that $\Vert w-\varphi([q;w])\Vert_{1}\ge \frac{1}{3}$ and $\Vert w-\varphi([q;w])\Vert_{\infty}\ge \frac{1}{6}$.
    \item[(ii)] For each $\Delta\in(0,1)$ and each integer $n\ge \frac{4}{3\Delta}+6$ there exists a weighted game $[q;w]$, where $q\in(0,1]$, 
                $w\in\mathbb{R}^n_{\ge 0}$, $\Vert w\Vert_1=1$, and $\Delta(w)=\Delta$, 
               such that $\Vert w-\varphi([q;w])\Vert_{1}\ge \frac{1}{3}$, and $\Vert w-\varphi([q;w])\Vert_{\infty}\ge\Delta/4$.
  \end{itemize} 
\end{Proposition}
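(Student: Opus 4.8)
The starting point is that $\varphi$ depends only on the game: if $v=[q;w]=[\bar{q};\bar{w}]$ then $\varphi([q;w])=\varphi([\bar{q};\bar{w}])=:p$, so both normalized representations are measured against one and the same vector $p\in\mathbb{R}_{\ge 0}^n$. Consequently, whenever Lemma~\ref{lemma_lb_approximation_q} (for (i)) or Lemma~\ref{lemma_lb_approximation_delta_1} (for (ii)) hands us two representations $w,\bar{w}$ of a common weighted game with $\Vert w-\bar{w}\Vert_1$ and $\Vert w-\bar{w}\Vert_\infty$ bounded below, Lemma~\ref{lemma_general_lower_distance_bound} already tells us that \emph{one} of $w,\bar{w}$ is far from $p$ in each of the two norms separately. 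The only genuine work is to certify that a \emph{single} representation carries both distance bounds at once; for this I would exploit the special shape of the constructed pairs, where $w-\bar{w}$ is a sum of scaled transpositions on pairwise disjoint coordinates.

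For (i) I take $w,\bar{w}$ from Lemma~\ref{lemma_lb_approximation_q}; in each of its three cases $w-\bar{w}=\tfrac13(e_i-e_j)$ for two coordinates $i\neq j$, so $\Vert w-\bar{w}\Vert_1=\tfrac23$ and $\Vert w-\bar{w}\Vert_\infty=\tfrac13$. Since $w$ and $\bar{w}$ differ by $\tfrac13$ in coordinate $i$, the triangle inequality rules out both $|w_i-p_i|<\tfrac16$ and $|\bar{w}_i-p_i|<\tfrac16$; hence at most one of $w,\bar{w}$ has $\Vert\cdot-p\Vert_\infty<\tfrac16$, and after relabeling we may assume $\Vert\bar{w}-p\Vert_\infty\ge\tfrac16$. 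If also $\Vert\bar{w}-p\Vert_1\ge\tfrac13$ we are done with $\bar{w}$. Otherwise $\Vert\bar{w}-p\Vert_1<\tfrac13$ forces $|(\bar{w}-p)_i|<\tfrac13$ and $|(\bar{w}-p)_j|<\tfrac13$ with at least one of the two being $\le\tfrac16$; shifting that coordinate by $\pm\tfrac13$ gives $\Vert w-p\Vert_\infty\ge\tfrac16$, while $\Vert w-p\Vert_1\ge\Vert w-\bar{w}\Vert_1-\Vert\bar{w}-p\Vert_1>\tfrac13$, so the original $w$ works. This yields (i) with exactly the constants $\tfrac13$ and $\tfrac16$ of the statement.

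For (ii) and $\Delta\ge\tfrac23$ this is literally the same argument applied to $w=(\Delta,1-\Delta,0,\dots,0)$ and $\bar{w}=(1-\Delta,\Delta,0,\dots,0)$ from Lemma~\ref{lemma_lb_approximation_delta_1}: here $w-\bar{w}=(2\Delta-1)(e_1-e_2)$ with $2\Delta-1\ge\tfrac13$ and $\tfrac12(2\Delta-1)=\Delta-\tfrac12\ge\tfrac{\Delta}{4}$, so one representation achieves $\Vert\cdot-p\Vert_1\ge\tfrac13$ and $\Vert\cdot-p\Vert_\infty\ge\tfrac{\Delta}{4}$ simultaneously. For $0<\Delta<\tfrac23$ I feed the pair $w,\bar{w}$ of Lemma~\ref{lemma_lb_approximation_delta_1} into the same scheme: there $w-\bar{w}$ is a sum of disjoint scaled transpositions whose largest scaling constant equals $\tfrac{\Delta}{2}=\Vert w-\bar{w}\Vert_\infty$. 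Inspecting the coordinate pair realizing this largest constant shows, as above, that at most one of $w,\bar{w}$ is within $\tfrac{\Delta}{4}$ of $p$ in $\Vert\cdot\Vert_\infty$; say $\Vert\bar{w}-p\Vert_\infty\ge\tfrac{\Delta}{4}$. If $\Vert\bar{w}-p\Vert_1\ge\tfrac13$ we are done, and otherwise $\Vert\bar{w}-p\Vert_1<\tfrac13$ gives $\Vert w-p\Vert_1>\tfrac23-\tfrac13=\tfrac13$; it then remains to recover $\Vert w-p\Vert_\infty\ge\tfrac{\Delta}{4}$, which I would obtain by contradiction: if $|(w-p)_k|<\tfrac{\Delta}{4}$ on every one of the $2a$ coordinates moved by the $a=\lfloor\tfrac{2}{3\Delta}\rfloor$ transpositions of constant $\tfrac{\Delta}{2}$, then $|(\bar{w}-p)_k|>\tfrac{\Delta}{4}$ on each of them, and summing these contributions (together with those of the three auxiliary transpositions of constant $\tfrac13-\tfrac{a\Delta}{2}$) should contradict $\Vert\bar{w}-p\Vert_1<\tfrac13$.

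The main obstacle is precisely this last bookkeeping step for $0<\Delta<\tfrac23$: the crude bound $2a\cdot\tfrac{\Delta}{4}=\tfrac{a\Delta}{2}$ with $a=\lfloor\tfrac{2}{3\Delta}\rfloor$ equals $\tfrac13$ only when $\tfrac{2}{3\Delta}$ is an integer and falls slightly short of it otherwise, so the sum must be carried out carefully, keeping the contribution of the three auxiliary transpositions --- and, if needed, the sign pattern of $w-p$ on the moved coordinates and the non-negativity of $p$ --- in the balance; it is this numerical tightness, not any conceptual difficulty, that makes (ii) more delicate than (i). Once the single-representation claim is secured, the proposition follows at once, because both $w$ and $\bar{w}$ produced by Lemma~\ref{lemma_lb_approximation_delta_1} are normalized with maximum entry exactly $\Delta$, so either of them is an admissible choice for the $w$ in the statement.
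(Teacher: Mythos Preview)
The paper's own proof is one sentence: combine Lemma~\ref{lemma_general_lower_distance_bound} with Lemmas~\ref{lemma_lb_approximation_q} and~\ref{lemma_lb_approximation_delta_1}. Applied to each norm separately, this yields the $\Vert\cdot\Vert_1$ bound for one of $w,\bar w$ and the $\Vert\cdot\Vert_\infty$ bound for one of $w,\bar w$, but not necessarily for the same one; the paper does not address that point. You have spotted the issue and are working to secure both inequalities for a \emph{single} representation. Your argument for~(i), and for~(ii) when $\Delta\ge\tfrac23$, is correct and goes beyond what the one-line proof literally delivers.

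For~(ii) with $0<\Delta<\tfrac23$, however, the obstacle you flag is not mere bookkeeping: it cannot be overcome while staying with the pair $w,\bar w$ of Lemma~\ref{lemma_lb_approximation_delta_1}. Take $\Delta=\tfrac12$ (so $a=1$), where
\[
  w=\Bigl(\tfrac12,\tfrac14,\tfrac1{12},0,\tfrac1{12},0,\tfrac1{12},0,0,\dots\Bigr),\qquad
  \bar w=\Bigl(\tfrac14,\tfrac12,0,\tfrac1{12},0,\tfrac1{12},0,\tfrac1{12},0,\dots\Bigr),
\]
and for small $\varepsilon>0$ put
\[
  p=\Bigl(\tfrac38+\varepsilon,\ \tfrac38-\varepsilon,\ 0,\ \tfrac1{12},\ 0,\ \tfrac1{12},\ 0,\ \tfrac1{12},\ 0,\dots\Bigr)\in\mathbb{R}_{\ge 0}^n .
\]
Then $\Vert w-p\Vert_\infty=\tfrac18-\varepsilon<\tfrac{\Delta}{4}$ while $\Vert\bar w-p\Vert_1=\tfrac14+2\varepsilon<\tfrac13$, so neither $w$ nor $\bar w$ meets both bounds for this choice of $\varphi$. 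The six auxiliary coordinates contribute nothing to $\Vert\bar w-p\Vert_1$ here because $p$ coincides with $\bar w$ on them, so your proposed sum cannot reach~$\tfrac13$. To rescue the single-$w$ reading in this range you would have to bring in a third representation of the same game (for instance $w'=(\tfrac12,\tfrac12,0,\dots,0)$ satisfies both bounds against the $p$ above); if instead you are content with the paper's weaker reading --- one representation per norm --- then the direct combination of the three lemmas already suffices and none of your extra work is needed.
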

\begin{proof}
  Combine Lemma~\ref{lemma_general_lower_distance_bound} with lemmas \ref{lemma_lb_approximation_q} and \ref{lemma_lb_approximation_delta_1}.
\end{proof}

\begin{Proposition}
  \label{prop_impossible2}
  Let $\varphi$ be a mapping from the set of weighted games (on $n$ players) into $\mathbb{R}_{\ge 0}^n$. For each $q\in(0,1)$, $\Delta\in(0,1]$, there 
  exist $w,\bar{w}\in\mathbb{R}_{\ge 0}^n$, $\bar{q}\in(0,1]$ with $\Vert w\Vert_1=\Vert\bar{w}\Vert_1=1$, $\Delta(w)=\Delta$, $[q;w]=[\bar{q};\bar{w}]$, 
  and
  $$
    \Vert \bar{w}-\varphi([\bar{q};\bar{w}])\Vert_1\ge \frac{1}{200}\cdot \min\left\{2, \frac{4\Delta}{\min\{q,1-q\}}\right\}.
  $$
\end{Proposition}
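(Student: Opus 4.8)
The plan is to reduce Proposition~\ref{prop_impossible2} to Lemma~\ref{lemma_lb_diam_representation_polytop} and Lemma~\ref{lemma_general_lower_distance_bound}. A word of caution at the outset: simply plugging the two representations produced by Lemma~\ref{lemma_lb_diam_representation_polytop} into Lemma~\ref{lemma_general_lower_distance_bound} only yields the constant $\tfrac{1}{400}$, since Lemma~\ref{lemma_general_lower_distance_bound} costs a factor $\tfrac12$. To reach $\tfrac{1}{200}$ I will instead exploit the full \emph{diameter} of the weight polytope of the game that Lemma~\ref{lemma_lb_diam_representation_polytop} builds, which is essentially twice the distance recorded in that lemma's statement.

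First I would fix $q\in(0,1)$, $\Delta\in(0,1]$ and take $n\ge\tfrac{1}{\Delta}+2$ as in Lemma~\ref{lemma_lb_diam_representation_polytop}. Following that proof I set $w=(\Delta,\dots,\Delta,1-a\Delta,0,\dots,0)\in\mathbb{R}^n_{\ge 0}$ with $a=\lfloor 1/\Delta\rfloor$, so that $\Vert w\Vert_1=1$, $\Delta(w)=\Delta$, and the game $v:=[q;w]$ equals $v_{k,s,n-s}$ for suitable integers $1\le k\le s\le n-1$. The key point I would extract from that proof is the diameter estimate
$$
  \operatorname{diam}(\mathsf{W}(v))\ \ge\ \frac{1}{80}\cdot M,\qquad M:=\min\left\{2,\frac{4\Delta}{\min\{q,1-q\}}\right\}>0 .
$$
For $k=s$ this is immediate from Lemma~\ref{lemma_diameter2}, since $\operatorname{diam}(\mathsf{W}(v_{s,s,t}))\ge\tfrac23\ge\tfrac{1}{80}M$ (recall $M\le 2$). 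For $k<s$ it follows from Lemma~\ref{lemma_diameter} together with the very case analysis already carried out in the proof of Lemma~\ref{lemma_lb_diam_representation_polytop}, which shows $\max\left\{\tfrac{1}{10k},\tfrac{1}{10(s-k)}\right\}\ge\tfrac{1}{80}M$.

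Next I would invoke the observation in Section~\ref{sec_weight_polytope}: for every $\varepsilon>0$ there exist normalized representations $(q_1;u_1)$ and $(q_2;u_2)$ of $v$ with $\Vert u_1-u_2\Vert_1\ge\operatorname{diam}(\mathsf{W}(v))-\varepsilon$. I choose $\varepsilon:=\tfrac{M}{400}$, which is positive because $\Delta>0$ and $0<q<1$. Since $[q_1;u_1]=v=[q_2;u_2]$ we have $\varphi([q_1;u_1])=\varphi([q_2;u_2])=\varphi(v)$, so Lemma~\ref{lemma_general_lower_distance_bound} applied to this pair with the norm $\Vert\cdot\Vert_1$ gives
$$
  \max\left\{\Vert u_1-\varphi(v)\Vert_1,\ \Vert u_2-\varphi(v)\Vert_1\right\}\ \ge\ \frac{\Vert u_1-u_2\Vert_1}{2}\ \ge\ \frac12\left(\frac{M}{80}-\frac{M}{400}\right)\ =\ \frac{M}{200}.
$$
Letting $\bar w$ denote whichever of $u_1,u_2$ attains the maximum and $\bar q$ the corresponding quota, we obtain $\bar w\in\mathbb{R}^n_{\ge 0}$ with $\Vert\bar w\Vert_1=1$, $\bar q\in(0,1]$, $[q;w]=v=[\bar q;\bar w]$, $\Delta(w)=\Delta$, and $\Vert\bar w-\varphi([\bar q;\bar w])\Vert_1=\Vert\bar w-\varphi(v)\Vert_1\ge\tfrac{M}{200}$, which is precisely the claimed bound.

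The main thing to get right is the constant bookkeeping flagged at the start: one must not discard the factor of two between the \emph{point} distance $\tfrac{1}{160}M$ that appears (after a boundary perturbation) in the statement of Lemma~\ref{lemma_lb_diam_representation_polytop} and the \emph{diameter} $\tfrac{1}{80}M$ of $\mathsf{W}(v)$. Running Lemma~\ref{lemma_general_lower_distance_bound} on a near-diameter pair of representations, rather than on a pair one of whose endpoints is the anchor vector $w$, is exactly what recovers that factor. Everything else is the triangle inequality plus results already available in the excerpt, so no new combinatorial construction is needed.
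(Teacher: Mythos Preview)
Your proposal is correct and mirrors the paper's own proof almost exactly: the paper also constructs $w$ with $[q;w]=v_{k,s,n-s}$, notes that the analysis inside the proof of Lemma~\ref{lemma_lb_diam_representation_polytop} actually yields $\operatorname{diam}(\mathsf{W}([q;w]))\ge\tfrac{1}{80}M$, picks two points in $\mathsf{W}([q;w])$ realizing this diameter, applies the triangle inequality (your Lemma~\ref{lemma_general_lower_distance_bound}) to obtain $\tfrac{1}{160}M$, and then perturbs into the interior to reach the stated $\tfrac{1}{200}$. The only cosmetic difference is that you absorb the perturbation up front via the Section~\ref{sec_weight_polytope} observation (with $\varepsilon=M/400$), while the paper defers it to the last step; the constant bookkeeping comes out the same.
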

\begin{proof}
  We construct $w$ as in the proof of Lemma~\ref{lemma_lb_diam_representation_polytop} and choose integers $k$, $s$, and $t$ such that $[q;w]=v_{k,s,t}$. 
  In the proof of Lemma~\ref{lemma_lb_diam_representation_polytop} we have actually verified 
  $$
    \operatorname{diam}(\mathsf{W}([q;w]))\ge \frac{1}{80}\cdot \min\left\{2, \frac{4\Delta}{\min\{q,1-q\}}\right\}=:\Lambda.
  $$
  Now choose $w',w''\in \mathsf{W}([q;w])$ with $\Vert w'-w''\Vert_1\ge \Lambda$. By the triangle inequality we have either $\Vert w'-\varphi([q;w])\Vert_1\ge \Lambda/2$ 
  or $\Vert w''-\varphi([q;w])\Vert_1\ge \Lambda/2$. By choosing $\bar{w}$ as $w'$ or $w''$ and eventually moving it into the interior of $\mathsf{W}([q;w])$ we obtain 
  the stated result.
\end{proof}

 So, upper bounds for the $\Vert\cdot\Vert_1$-distance between normalized weights and a power distribution, as in Inequality~(\ref{ie_ub_nucleolus} or 
 Inequality~(\ref{ie_ub_ssi}) are tight up to the constant $c$ if only the normalized quota and the normalized maximum weight are taken into account.

\section{Conclusion}
\label{sec_conclusion}

In this paper we have introduced the concept of the diameter of the weight polytope of a weighted game. This number measures how diverse 
two different normalized weight vectors, representing the same given game, can be. In Theorem~\ref{thm_weighted_representation} we have shown that 
$$
    \operatorname{diam}(\mathsf{W}([q;w]))\le \min\left\{2,\frac{4\Delta}{\min\{q,1-q\}}\right\}\le\frac{4\Delta}{\min\{q,1-q\}},
$$
for any $q\in(0,1)$ and any $w\in\mathbb{R}_{\ge 0}^n$ with $\Vert w\Vert_1=1$. Lemma~\ref{lemma_lb_diam_representation_polytop} certifies that this upper 
bound is in general, i.e., in the worst case, tight up to a constant. (This paper traded smaller constants for easier proofs.) The super-exponential growth 
of the number of weighted games (see \cite{zuev1989asymptotics}) indicates that this is not  the case for the majority of weighted games. Thus, it would be 
interesting to determine other parameters of a representation of a weighted game that permit tight upper bounds on the diameter of the corresponding weight 
polytope. Another possible line for future research is to consider games with a priori unions, spatial games, or games with restricted communication. 

As shown in Section~\ref{sec_applications}, there are connections to approximations of power indices by weight vectors. 
Proposition~\ref{prop_impossible2} gives a partial explanation for the conditions of the main theorem of \cite{neyman1982renewal} on a limit 
result for the Shapley-Shubik index. Moreover, for a general power index it shows that upper bounds for the $\Vert\cdot\Vert_1$-distance between normalized 
weights and a power distribution, taking only the normalized quota and the normalized maximum weight into account, as in Corollary~\ref{cor_ub_general}, 
would be tight up to a constant.

\section*{Acknowledgment}
The author would like to thank the anonymous referees of a previous submission for their very helpful remarks and suggestions.

\appendix
\section{Delayed proofs}
\label{sec_evacuated_proofs}
\begin{proof}(Lemma~\ref{lemma_relation_maximum_laakso_taagepera})\\
  For $n=1$, we have $w_1=1$, $\Delta(w)=1$, $\alpha=0$, and $L(w)=1$, so that we assume $n\ge 2$ in the remaining part of the proof.
  For $w_i\ge w_j$ consider $a:=\frac{w_i+w_j}{2}$ and $x:=w_i-a$, so that $w_i=a+x$ and $w_j=a-x$. With this we have 
  $w_i^2+w_j^2=2a^2+2x^2$ and $(w_i+y)^2+(w_j-y)^2=2a^2+2(x+y)^2$. Let us assume that $w^\star$ minimizes $\sum_{i=1}^n w_i^2$ under 
  the conditions $w\in\mathbb{R}_{\ge 0}$, $\Vert w\Vert_1=1$, and 
  $\Delta(w)=\Delta$. (Since the target function is continuous and the feasible set is compact and non-empty, a global minimum indeed 
  exists.) W.l.o.g.\ we assume $w_1^\star=\Delta$. If there are indices $2\le i,j\le n$ with $w_i^\star>w_j^\star$, i.e., $x>0$ in the 
  above parameterization, then we may choose $y=-x$. Setting $w_i':=w_i^\star+y=a=\frac{w_i^\star+w_j^\star}{2}$, $w_j':=w_j^\star-y=a
  =\frac{w_i^\star+w_j^\star}{2}$, and $w_h':=w_h^\star$ for all $1\le h\le n$ with $h\notin\{i,j\}$, we have $w'\in \mathbb{R}_{\ge 0}^n$, 
  $\Vert w'\Vert_1=1$, $\Delta(w')=\Delta$, and $\sum_{h=1}^n \left(w_h'\right)^2=\sum_{h=1}^n \left(w_h^\star\right)^2\,-\,x^2$. 
  Since this contradicts the minimality of $w^\star$, we have $w_i^\star=w_j^\star$ for all $2\le i,j\le n$, so that we conclude 
  $w_i^\star=\frac{1-\Delta}{n-1}$ for all $2\le i\le n$ from $1=\Vert w^\star\Vert_1=\sum\limits_{h=1}^n w_h^\star$.
  Thus, $L(w)\le 1/\left(\Delta^2+\frac{(1-\Delta)^2}{n-1}\right)$, which is tight. Since $\Delta\le 1$ and $n\ge 2$, we have 
  $1/\left(\Delta^2+\frac{(1-\Delta)^2}{n-1}\right)\le \frac{1}{\Delta^2}$, which is tight if and only if $\Delta=1$, i.e., 
  $n-1$ of the weights have to be equal to zero.
  
  Now, let us assume that $w$ maximizes $\sum_{i=1}^n w_i^2$ under the conditions $w\in\mathbb{R}_{\ge 0}$, $\Vert w\Vert_1=1$, and 
  $\Delta(w)=\Delta$. (Due to the same reason a global maximum indeed exists.) Due to $1=\Vert w\Vert_1\le n\Delta$ we have $0<\Delta\le 1/n$, 
  where $\Delta=1/n$ implies $w_i=\Delta$ for all $1\le i\le n$. In that case we have $L(w)=n$ and $\alpha=0$, so that the stated lower bounds 
  for $L(w)$ are valid. In the remaining cases we assume $\Delta>1/n$. If there would exist two indices $1\le i,j\le n$ with $w_i\ge w_j$, 
  $w_i<\Delta$, and $w_j>0$, we may strictly increase the target function by moving weight from $w_j$ to $w_i$ (this corresponds to 
  choosing $y>0$), by an amount small enough to still satisfy the constraints $w_i\le \Delta$ and $w_j\ge 0$. Since $\Delta>0$, we can set 
  $a:=\lfloor 1/\Delta\rfloor\ge 0$ with $a\le n-1$ due to $\Delta>1/n$. Thus, for a maximum solution, we
  have exactly $a$ weights that are equal to $\Delta$, one weight that is equal to $1-a\Delta\ge 0$ (which may indeed 
  be equal to zero), and $n-a-1$ weights that are equal to zero. With this and $a\Delta=1-\alpha\Delta$ we have 
  $
    \sum_{i=1}^n w_i^2=a\Delta^2 (1-a\Delta)^2=\Delta-\alpha\Delta^2+\alpha^2\Delta^2=\Delta(1-\alpha\Delta+\alpha^2\Delta)
    =\Delta\left(1-\alpha(1-\alpha)\Delta\right)\le\Delta
  $.   
  Here, the latter inequality is tight if and only if $\alpha=0$, i.e., $1/\Delta\in\mathbb{N}$.
\end{proof}

\bigskip

\begin{proof}(Lemma~\ref{lemma_qdelta_bound_winning})\\
  We set $N=\{1,\dots,n\}$, $w(U)=\sum_{u\in U} w_u$ and $x(U)=\sum_{u\in U} x_u$ for each $U\subseteq N$. Let $S^+=\{i\in N \mid x_i>w_i\}$ and 
  $S^-=\{i\in N\mid x_i\le w_i\}$, i.e., $S^+$ and $S^-$ partition the set $N$ of players. We have $w(S^+)<1$ since $w(S^+)<x(S^+)\le x(N)=1$, so 
  that $w(S^-)>0$. Define $0\le \delta\le 1$ by $x(S^-)=(1-\delta)w(S^-)$. We have 
  \begin{equation}
    x(S^+)=1-x(S^-)=w(S^+)+w(S^-)-(1-\delta)w(S^-)=w(S^+)+\delta w(S^-)
  \end{equation}   
  and
  \begin{equation}
    \label{eq_dist_delta}
    \Vert w-x\Vert_1
    =\left(x(S^+)-w(S^+)\right)+\left(w(S^-)-x(S^-)\right)=2\delta w(S^-).
  \end{equation}
  Generate a set $T$ by starting at $T=\emptyset$ and successively add a remaining player $i$ in $N\backslash T$ with minimal
  $x_i/w_i$, where all players $j$ with $w_j=0$ are the worst ones. Stop if $w(T)\ge q$. By construction $T$ is a winning coalition 
  of $[q;w]$ with $w(T)<q+\Delta$, since the generating process did not stop earlier and $w_j\le \Delta(w)$ for all $j\in N$.   
  
  If $w(S^-)\ge q$, we have $T\subseteq S^-$ and $x(T)/w(T)\le x(S^-)/w(S^-)=1-\delta$. Multiplying by $w(T)$ and using $w(T)<q+\Delta$ 
  yields
  \begin{equation}
    x(T)\le (1-\delta)w(T)<(1-\delta)(q+\Delta)= (1-\delta)q +(1-\delta)\Delta. 
  \end{equation}
  Since $x(T)\ge q$, as $T$ is a winning coalition, we conclude $\delta<\Delta/(q+\Delta)$. Using this and $w(S^-)<1$ in Equation~(\ref{eq_dist_delta}) 
  yields 
  \begin{equation}
    \Vert w-x\Vert_1<\frac{2\Delta}{q+\Delta}<\frac{2\Delta}{q}.
  \end{equation}  
  
  If $w(S^-)< q$, we have $S^-\subseteq T$, $x(T)=x(S^-)+x(T\backslash S^-)$, $w(T\backslash S^-)>0$, and $w(S^+)>0$. Since 
  $T\backslash S^-\subseteq S^+$, 
  $x(T\backslash S^-)/w(T\backslash S^-)\le x(S^+)/w(S^+)$, so that
  \begin{eqnarray*}
    x(T)&=&x(S^-)+x(T\backslash S^-)\le (1-\delta)w(S^-)+\frac{x(S^+)}{w(S^+)}\cdot\left(w(T)-w(S^-)\right)\\
    &\le& (1-\delta)w(S^-)+\frac{x(S^+)}{w(S^+)}\cdot\left(q+\Delta-w(S^-)\right)\\
    &=& 
    q+\frac{x(S^+)\Delta-(1-q)\delta w(S^-)}{w(S^+)}\\
    &\le & q+\frac{\Delta-(1-q)\delta w(S^-)}{w(S^+)}.
  \end{eqnarray*} 
  Since $x(T)\ge q$, we conclude $(1-q)\delta w(S^-)\le \Delta$, so that 
  $\Vert w-x\Vert_1\le \frac{2\Delta}{1-q}$.     
\end{proof}

\bigskip

\begin{proof}(Lemma~\ref{lemma_qdelta_bound_losing})\\
  If $q\le 2\Delta$, then $\frac{4\Delta}{\min\{q,1-q\}}\ge \frac{4\Delta}{q}\ge 2\ge \Vert x-w\Vert_1$, so that we can assume $q>\Delta$. 

  Using the notation from the proof of Lemma~\ref{lemma_qdelta_bound_winning}, we have $x(S^+)=w(S^+)+\delta w(S^-)$ and 
  $\Vert w-x\Vert_1=2\delta w(S^-)$.   
    
  Generate $T$ by starting at $T=\emptyset$ and successively add a remaining player $i$ in $N\backslash T$ with maximal
  $x_i/w_i$, where all players $j$ with $w_j=0$ are taken in the first rounds, as long as $w(T)+w_i< q$. By construction $T$ is a losing coalition 
  of $[q;w]$ with $q-\Delta\le w(T)<q$, since the generating process did not stop earlier. 
  
  If $w(S^+)\ge q$, we have $T\subseteq S^+$ and $x(T)/w(T)\ge x(S^+)/w(S^+)=1+\frac{\delta w(S^-)}{w(S^+)}\ge 1+\delta w(S^-)$. Multiplying 
  by $w(T)$ and using $w(T)\ge q-\Delta$ yields
  \begin{equation*}
    x(T)\ge \left(1+\delta w(S^-)\right)w(T)\ge\left(1+\delta w(S^-)\right)(q-\Delta)= (q-\Delta) +\delta w(S^-)(q-\Delta). 
  \end{equation*}
  Since $x(T)\le q$, as $T$ is a losing coalition, we conclude $\delta w(S^-)\le\Delta/(q-\Delta)$, so that 
  $\Vert w-x\Vert_1<\frac{2\Delta}{q-\Delta}$.
  
  If $w(S^+)< q$, we have $S^+\subseteq T$, $x(T)=x(S^+)+x(T\backslash S^+)$, $w(T\backslash S^+)>0$, and $w(S^-)>0$. Since 
  $T\backslash S^+\subseteq S^-$, 
  $x(T\backslash S^+)/w(T\backslash S^+)\ge x(S^-)/w(S^-)$, so that
  \begin{eqnarray*}
    x(T)&=&x(S^+)+x(T\backslash S^+)\ge  w(S^+)+\delta w(S^-) +\frac{x(S^-)}{w(S^-)}\cdot\left(w(T)-w(S^+)\right)\\
    &\ge& w(S^+)+\delta w(S^-) +(1-\delta)\cdot\left(q-\Delta-w(S^+)\right)\\
    &=& \delta w(S^-)+q-\Delta-\delta q+\delta \Delta+\delta w(S^+)
    = q-\Delta+\delta(1-q+\Delta).
  \end{eqnarray*} 
  Since $x(T)\le q$, 
  $\delta\le \frac{\Delta}{1-q+\Delta}$, so that 
  $\Vert w-x\Vert_1\le \frac{2\Delta}{1-q+\Delta}$ due to $w(S^-)\le 1$.     
  
  So, for $q>\Delta$ we have $\Vert w-x\Vert_1 \le \frac{2\Delta}{\min\{q-\Delta,1-q+\Delta\}}\le \frac{2\Delta}{\min\{q-\Delta,1-q\}}$. 
  In order to show $\Vert w-x\Vert_1\le \frac{4\Delta}{\min\{q,1-q\}}$ it remains to consider the case $q\le 1-q$. For $q>2\Delta$, see 
  the start of the proof, we have $\Vert w-x\Vert_1\le \frac{2\Delta}{\min\{q-\Delta,1-q\}}\le \frac{2\Delta}{q-\Delta}\le 
  \frac{4\Delta}{q}\le \frac{4\Delta}{\min\{q,1-q\}}$.
\end{proof}


\end{document}